
\documentclass[12pt]{article}
\usepackage{amsfonts}
\usepackage{amsmath}
\usepackage{amsthm}
\usepackage{hyperref}
\usepackage{graphicx}
\usepackage{natbib}
\usepackage{amssymb}
\usepackage[onehalfspacing]{setspace}
\usepackage{subfigure}
\usepackage{booktabs}
\usepackage{multirow}
\usepackage{float}
\usepackage[a4paper]{geometry}
\geometry{left=2.5cm, right=2.5cm, top=2.5cm,bottom=3cm}
\usepackage{xcolor}

\setcounter{MaxMatrixCols}{10}

\newtheorem{theorem}{Theorem}

\newtheorem{lemma}{Lemma}

\numberwithin {equation}{section} 
\newtheorem{assumption}{Assumption}

\newcommand{\eigbig}{\overline{\textit{eig}}}
\newcommand{\eigsmall}{\underline{\textit{eig}}}

\theoremstyle{definition}
\newtheorem{example}{Example}
\begin{document}
	
	\title{Testing linearity of spatial
		interaction functions \`a la Ramsey\thanks{This paper is dedicated to the memory of Francesca Rossi. We are grateful to a co-editor and three referees for comments that improved the paper considerably, and to seminar participants at Princeton, Aarhus, Cambridge, Queen Mary, RCEA, Warwick Econometrics Workshop, National University of Singapore, EcoSta, Midwest Econometrics Group 2024, Queen's and Exeter, as well as S\'{i}lvia Gon\c{c}alves and Mikkel S{\o}lvsten, for comments and feedback. Abhimanyu Gupta's research was supported by the Leverhulme Trust via grant RPG-2024-038.}}
	\author{Abhimanyu Gupta \thanks{%
			Department of Economics, Queen's University, Dunning Hall, 94 University Avenue, Kingston, Ontario K7L 3N6, Canada. Email: abhimanyu.g@queensu.ca}  \and Jungyoon Lee \thanks{%
			Department of Economics, Royal Holloway, University of London, Egham, TW20 0EX, UK. Email: jungyoon.lee@rhul.ac.uk} \and Francesca Rossi\thanks{Department of Economics, University of Verona,  via Cantarane 24, 37129, Verona, Italy.}}
	\date{\today}
	\maketitle
	\begin{abstract}
		\noindent	We propose a computationally straightforward test for the linearity of a spatial interaction function. Such functions arise commonly, either as practitioner imposed specifications or due to optimizing behaviour by agents. Our conditional heteroskedasticity robust test is nonparametric, but based on the Lagrange Multiplier principle and reminiscent of the Ramsey RESET approach. This  entails estimation only under the null hypothesis, which yields an easy to estimate linear spatial autoregressive model. Monte Carlo simulations show excellent size control and power. An empirical study with Finnish data illustrates the test's practical usefulness, shedding light on debates on the presence of tax competition among neighbouring municipalities.
		
		\noindent \textit{Keywords:} Spatial autoregression, Series expansion, Specification testing, Nonparametric. 
		
		\noindent  \textit{JEL Classification:} C31, C14.
		
	\end{abstract}
	
	\newpage	
	\section{Introduction}
	\allowdisplaybreaks
	There has been a recent surge in the study of econometric models capturing social interactions between agents, see e.g. \cite{paula_2017} for a survey. By virtue of its close links to network and social interaction models, the spatial econometrics literature has also grown commensurately. A key component of this literature is the linear spatial autoregressive (SAR) model, introduced by \cite{Cliff1968, cliff1973spatial}. The `spatial' moniker is unfortunate in some sense because it belies the generality of the SAR model. In fact the `space' in question can be any economic dimension that connects individuals, e.g. social networks and conflict alliances. In this paper we propose and theoretically justify a convenient nonparametric test for the linearity of the SAR specification, which is by far the most prevalent in the literature.
	
	The assumption of linearity in the preponderance of studies involving SAR models is perhaps unsurprising given that even in straightforward multiple regression the linear model wins the popularity contest. Nevertheless, linearity is a strong assumption, criticized by \cite{Pinkse2010} for instance, and discussed in detail in \cite{paula_2017}, Section 3.2. In the case of spatial autoregressions, nonlinearity in the spatial lag also induces immense technical difficulties due to the simultaneity inherent in the model. This has led to the seminal development of sophisticated machinery \citep{Jenish2009, Jenish2012}, but the flipside is that the conditions tend to be technically challenging and restricted to geographic data.
	
	The econometric implications of nonlinearity in spatial or network interactions are manifold. The focus of this paper is to test for potential nonlinearity in the so called `link' function, which transmits a network connection-weighted average of peer outcomes to an individual's own outcome.  \cite{Tincani2018} finds evidence of nonlinearities in education peer effects in Chile, the link function here operating on the distribution of peer outcomes. Other implications of nonlinearities can include multiple equilibria, or potentially none at all. We refer the reader to detailed discussions of various forms of nonlinearity and their implications in, for example, \cite{Blume2011} and \cite{paula_2017}.
	
	While of importance in its own econometric right, and possibly in a reduced form sense, testing for the linearity of a SAR specification can also address deeper economic considerations. Many network games yield equilibria with linear best responses and, when taken to the data, imply linear SAR specifications. Thus, a rejection of the linearity of the SAR specification can in fact be viewed as a rejection of the underlying structural economic model. Examples of such models include contest success functions (CSFs) in the \cite{Tullock1980} style \citep{konig2017networks}, quadratic utility network games \citep{ballester2006s, calvo2009peer}  and public goods provision games in networks \citep{bramoulle2007public}. These are discussed in more detail in the next section.
	
	Nonlinearity of best responses in network games can have profound economic implications. \cite{acemoglu2015networks} emphasize that convexity or concavity of nonlinear network interactions can shape how the underlying network transmits shocks. Most strikingly, they show that with concave interactions, densely connected economies outperform sparser ones in the sense of expected macroeconomic outcomes. Indeed, an economy in which interlinkages are maximally dense outperforms all other (symmetric) economies. On the contrary, with convex interactions this pattern is completely reversed, with the maximally dense economy being the worst performing. The finding that concavity of interactions acts as a `shock absorber' while convexity turns this into a `shockwave propogator' aligns with the groundbreaking work of \cite{allen2000financial} on financial contagion. Thus, if a theoretical structural model begets a linear best response equilibrium, it behoves the economist to have some statistical confidence in the resulting empirical model.
	
	We employ a residual based test statistic reminiscent of the Lagrange Multiplier (LM) test to avoid nonparametric estimation altogether. The basic idea is to approximate the potentially nonlinear part of the model with a sieve expansion of length $p$. By choosing a basis of polynomial components, a test of linearity may be constructed
	by setting to zero all coefficients except the one corresponding to the linear term. This is an approach in the spirit of the famous Ramsey RESET test described in \cite{Ramsey1969}. By employing the LM principle, we require parameter estimates only under the null hypothesis, which yields the familiar linear SAR model. 
	
	Because $p$ is a smoothing parameter, the number of coefficients on nonlinear terms must grow with sample size. Therefore our test will be for an increasing number of restrictions $p\rightarrow\infty$ asymptotically, and the usual $\chi^2_p$ asymptotic distribution cannot be relied upon. To address this, we will use a centred and scaled
	version of the statistic. In particular our test statistic will be of the form $(\mathcal{T}_p-p)/\sqrt{2p}$, where $\mathcal{T}_p$ is computable by estimating the model under the null of linearity. We will show that this is asymptotically standard normal under the null, noting
	that a $\chi^2_p$ random variable has mean $p$ and variance $2p$, in the spirit of \cite{DeJong1994}, \cite{Hong1995}, \cite{Donald2003} and \cite{Gupta2023}, to cite a few examples. We also establish the test's consistency and ability to detect local alternatives at a suitably dampened nonparametric rate.
	
	There are several specification tests in the literature for linearity of the regression function in a linear SAR model, see for instance \cite{Su2017},  \cite{Gupta2022} and \cite{Chen2025}. On the other hand, the cupboard is rather bare as far as tests of linearity of the spatial lag are concerned. The most direct linearity test is provided by \cite{hoshino2022sieve} and relies on the nonparametric function being estimated. This is a different approach from our Ramsey-style test, which stresses simplicity. The model considered is also somewhat different from ours: we test for linearity of the spatial lag as a whole while \cite{hoshino2022sieve} focuses on testing linearity in the outcome variable. Thus, the two approaches to testing linearity complement each other. Another type of linearity test is proposed by \cite{Malikov2017}, but there the spatial parameter itself is modelled as a varying coefficient while the spatial lag is linear. This is rather different from our setup. Finally, another class of tests is of the omnibus type where model misspecification can arise from a variety of sources, see \cite{Lee2024} and \cite{Yang2024} for tests of the `integrated conditional moments' type. 
	
	In a Monte Carlo simulation study, we show that critical values based on our theory deliver excellent size and power performance for a wide variety of commonly encountered  social interaction networks and spatial links. These include contiguity, nearest neighbours, distance cutoff networks and more general structures. We experiment with both standard normal critical values as well as standardized $(\chi^2_p-p)/\sqrt{2p}$ critical values, and observe that the latter can do well with smaller values of $p$.
	
	The linear SAR model is by far the most commonly employed empirical specification. In an  empirical study, we study a model of tax competition and show how our test for linearity can raise new questions or indeed refine existing analysis. Our test corroborates \cite{Lyytikaeinen2012}'s finding of absence of tax competition between Finnish  municipalities in their property tax setting behaviour, illustrating how  different  model specifications can lead to contrasting conclusions and how our test can offer  insightful guidance on the specification choice.
	
	The next section introduces our basic setup and a number of structural economic models that imply linear SAR specifications. In Section \ref{sec:stat} we define our test statistic based on linear moment conditions, while Section \ref{sec:asymptotics} presents the key assumptions and asymptotic results. In Section \ref{sec:quadmoms} we present an extension to a GMM setup with linear-quadratic moment conditions. Section \ref{sec:MC} contains the results of our Monte Carlo experiments and Section \ref{sec:applications} presents our empirical study of tax competition in Finland. Section \ref{sec:concl} concludes. All proofs are collected in the appendix.
	\section{Basic setup and examples}\label{sec:setup}
	Let $W$ be a spatial weight matrix with rows $w_i^{\prime }$ and zero
	diagonal. Given an $n\times 1$ response vector $y$ and $k\times 1$ covariate
	vector $x_i$ with unity as first element, we commence from  
	\begin{equation}  \label{model}
		y_i= \lambda w_i^\prime y + f\left(w_i^{\prime }y\right)+x_i^{\prime }\beta+\epsilon_i,
		i=1\ldots,n,
	\end{equation}
	with $f(\cdot)$ being an unknown nonlinear function, $\epsilon_i$ independent disturbances and $\beta\neq 0$. The last condition is imposed for now as our first testing procedure relies on linear moments constructed via instrumental variables that deliver identification only if $\beta\neq 0$, see e.g. \cite{kelejian1998generalized}. We permit $\beta=0$ in Section \ref{sec:quadmoms}. The $x_i$ can contain both exogenous and endogenous components. The first term on the RHS represents the usual linear SAR component and $f(\cdot)$ is a nonparametric function that embodies the potential non-linearity of the true data generating process. As a point of notation, note that in SAR models it is common practice to treat $y_i$, $w_i$ and $x_i$ as triangular arrays. Strictly speaking, this would require subscripting with $n$, but for notational simplicity we will omit this for the entire paper.
	
	
	
	Apart from being an often used econometric specification, the linear SAR model is implied by a number of theoretical models. Thus an empirical test of its linearity in fact serves as test of the underlying economic structure that leads to the estimation of a SAR structure. In this sense our test can be viewed as test of economic theory, rather than simply the statistical fit of an econometric model. We discuss some examples of such theories in this section, with $x_i$ always denoting a $k\times 1 $ vector of observable characteristics for the $i$-th individual, and $\beta$ a $k\times 1$ parameter vector. 
	
	\begin{example}[\textit{Tax Competition}]\label{example:tax}
		Tax competion is the setting of our empirical study in Section \ref{sec:applications}. Indeed, the literature on tax competition justifies the use of linear SAR empirical specifications by extending theoretical results of the \cite{Kanbur1993} model of  tax competition between two revenue-maximising governments in at least two ways.
		
		The first method, see e.g. \cite{Redoano2014} gives rise to linear tax reaction functions in an $n$-country setting.  Households determine their purchase of goods/investment portfolio  over $n$ countries by maximizing their returns net of  transaction costs that are increasing in distances.   The representative household of country $i$, which has capital endowment of $\tilde{k}_i$,  chooses its investment porfolio $k_{ij}$, $j=1, \cdots, n$, to maximize
		\begin{equation*}
			\displaystyle\sum_{j=1}^n k_{ij} ( 1-\tau_j)  - \displaystyle\sum_{j\neq i } c( d_{ij}) \frac{k_{ij}^2}{2}
		\end{equation*}
		subject to $\sum_j k_{ij} = \tilde{k}_i$, where  it is assumed that production  is linear in capital,  $\tau_{j}$ is the tax rate of country $j$, $d_{ij}$ is some notion of  distance between countries $i$ and $j$, and $c(\cdot)$ an increasing cost function. This leads to a quadratic revenue function for governments, so that government $i$ chooses its tax rate $\tau_i$ to maximize its total tax revenue
		\begin{equation*}
			TR_i=\tau_i \left( \tilde{k}_i + \displaystyle\sum_{j \neq i} \frac{\tau_{j}-\tau_i}{c(d_{ij})}   \right).
		\end{equation*} 
		The best responses are then linear in the $\tau_j$ and given by:
		\begin{equation}\label{hbr}
			\tau_i^*=\frac{1}{2 \displaystyle\sum_{j \neq i} \frac{1}{c(d_{ij})}}   \left(\tilde{k}_i + \displaystyle\sum_{j \neq i} \frac{\tau_{j}}{ c(d_{ij})}, \right)
		\end{equation} 
		leading to a linear SAR estimating equation below with a row-normalized inverse-distance weight matrix:
		\begin{equation}\label{taxsar}
			\tau_i=\alpha_i + x_i^{\prime}\beta  + \lambda \sum_{j\neq i} w_{ij}\tau_j + \epsilon_i
		\end{equation}
		where $\alpha_i$ is the country fixed effect.
		
		On the other hand, a second method proposed by \cite{Devereux2007} extends the \cite{Kanbur1993} model to allow individual demand for the taxed good to be price-elastic. While we refer the reader to the original paper for details of the modelling strategy, the empirical model in Section 4 therein employs (\ref{taxsar}) using data on cigarette taxation in US states and a number of economically motivated choices for $w_{ij}$. Yet another theoretical model, this time of corporate tax competition between OECD countries, can be seen to yield (\ref{taxsar}) in \cite{Devereux2008}.
	\end{example}
	
	\begin{example}[\textit{Conflict Networks}]\label{example:conflict}
		\cite{konig2017networks} bring a theoretical and empirical perspective on how a network of military alliances and enmities affects the intensity of a conflict. Their theoretical model combines network theory and politico-economic theory of conflict. Simplifying their setting for ease of exposition, for participants $i$ and $j$ in a conflict define the $n\times n$ adjacency matrix $W^*$ by
		\begin{equation}\label{konigW}
			w^*_{ij}=\begin{cases}
				1, &\text{if } i \text{ and } j \text{ are allies},\\
				0, &\text{if } i \text{ and } j \text{ are in a neutral relationship}.\\
			\end{cases}
		\end{equation}
		The $n$ participants compete for a divisible prize $V$, for example land or resources, and suffer a defeat cost $D\geq 0$ if they do not win a fraction of it. The payoff function maps the groups' relative fighting intensities into shares
		of the prize and can be interpreted as a \cite{Tullock1980}-type contest success function. Specifically, letting $\mathcal{G}^n$ denote the class of graphs on $n$ nodes, the payoff $\pi:\mathcal{G}^n\times \Re^n\rightarrow \Re$ for participant $i$ is 
		\begin{equation}\label{koningpayoff}
			\pi_i(G,y)=\begin{cases}
				\frac{V\varphi_i(G,y)}{\displaystyle\sum_{j=1}^n\max\left\{0,\varphi_j(G,y)\right\}}-y_i, & \text{if } \varphi_i(G,y)\geq 0,\\
				-D, & \text{if } \varphi_i(G,y)< 0,
			\end{cases}
		\end{equation}
		where $y\in\Re^n$ is a vector of fighting efforts and $\varphi_i(\cdot)\in\Re$ is participant $i$'s operational performance:
		\begin{equation}\label{konigstrategy}
			\varphi_i(G,y)=x_i'\beta+y_i+\lambda\sum_{j=1}^n\left(1-\mathbf{1}_\mathcal{D}(j)\right)w^*_{ij}y_j+\epsilon_i,	
		\end{equation}
		where $\lambda\in [0,1)$ is a linear spillover effect from allies fighting efforts, $\mathbf{1}_\mathcal{D}(j)$ is an indicator that takes the value 1 when participant $j$ accepts defeat and takes the cost $D$, and $\epsilon_i$ is an unobserved participant characteristic. As shown in \cite{konig2017networks}, such a model yields a Nash equilibrium which implies the structural relationship (\ref{model}) with $f(\cdot)\equiv 0$, i.e. a linear SAR specification.  
	\end{example}
	\begin{example}[\textit{Network Games with Quadratic Utility}]\label{example:peereffects}
		The model (\ref{model}) $f(\cdot)\equiv 0$ is also implied by a class of canonical games of externalities defined by quadratic utilities, see e.g. \cite{ballester2006s, calvo2009peer}. As an example, suppose a set of $n$ agents is connected via a binary network $W$ where $w_{ij}=1$ if agents $i$ and $j$ are friends, and zero otherwise. Each agent $i$ selects an effort level $e_i$ and receives the payoff 
		\begin{equation}\label{peereffectspayoff}
			u_i(e_i,e_{-i},x_i,W)=e_ix_i'\beta+e_i+\lambda\sum_{j=1}^n w_{ij}e_ie_j-\frac{e_i^2}{2},	
		\end{equation}
		where $e_{-i}$ denotes the effort vector of agents other than $i$, $\lambda\in (-1,1)$ is the peer effect. 
		
		To understand (\ref{peereffectspayoff}) economically, suppose the agents are students and effort is time spent studying for a test. Then (\ref{peereffectspayoff}) is interpreted as agent $i$'s grade is determined by their own observable abilities $x_i$, their own effort $e_i$ and the efforts of their friends $e_{-i}$ which influences their own effort level via the network $W$. 
		
		The Nash equilibrium of this game is given by the effort levels
		\begin{equation}\label{peereffectsNE}
			e^*_i(W)=x_i'\beta+\lambda	\sum_{j=1}^n w_{ij}e_j, i=1,\ldots,n,
		\end{equation}  
		which yields the linear SAR empirical specification. 
	\end{example}
	
	
	\begin{example}[\textit{Private Provision of Public Goods in a Network}]
		Consider a version of the model of \cite{bramoulle2007public} with covariates (see also \cite{Bramoulle2014} for further discussion). In such models public goods are local: agent $i$ benefits from or is damaged by their neighbour's good provision. Then, for a network $W$ where $w_{ij}=1$ if agents $i$ and $j$ are neighbours, and zero otherwise, agent $i$'s payoff from taking action $y_i$ is given by
		\begin{equation}\label{publicgoodspayoff}
			u_i\left(y_i,y_{-i},x_i,W\right)=b_i\left(y_i+\lambda\sum_{j=1}^n w_{ij}y_j -  x_i'\beta\right)-\kappa_i y_i,	
		\end{equation}
		where $b_i(\cdot)$ is differentiable, strictly increasing and concave in $y_i$, $\kappa_i$ is marginal cost with $b_i'(0)> \kappa_i> b_i'(\infty)$ and $\lambda\in (0,1)$. This yields the best responses
		\begin{equation}\label{publicgoodsNE}
			y_i^*=x_i'\beta+\bar y_i-\lambda \sum_{j=1}^n w_{ij}y_j,i=1,\ldots,n,
		\end{equation}
		where $\bar y_i$  is defined to be the value that satisfies $b_i^\prime(\bar y_i)\equiv\kappa_i$. The system (\ref{publicgoodsNE}) implies a linear SAR empirical specification.
	\end{example}
	We have presented four examples where the linear SAR specification arose from specific  economic modeling choices, bearing in mind that our test is for linearity. We refer interested readers to Section 2 of \cite{acemoglu2015networks}  for a detailed treatment of  instances where economic modeling leads to nonlinear SAR specifications, namely in networks games, production networks and financial contagion.

	\section{Test statistic}\label{sec:stat}
	We aim to develop a test of  
	\begin{equation}\label{nulltrue}
		\mathcal{H}_0: \  f(x)=0,
	\end{equation}	
	for all $x$ and for some admissible $\lambda$ and $\beta$. Our testing strategy will involve a sieve expansion of $f(\cdot)$ but parameter estimates that can be obtained simply under the null hypothesis (\ref{nulltrue}).
	
	Let $\psi_j(z)$, for $j=1,\ldots,p$, be a user-chosen set of basis functions such that\footnote{Since $x_i$ in (\ref{model}) already contains an intercept, $\psi_j(z)$  does not include a constant term.}
	\begin{equation}\label{lambda_prime}
		f(z)=\sum_{j=1}^p \alpha_j \psi_j(z)+r(z),
	\end{equation}
	with $p=p_n$ being a divergent deterministic sequence, $\alpha= (\alpha_1,\ldots, \alpha_p)^\prime$ a vector of unknown series coefficients and $r(z)$ an approximation error. 	Let $\theta= (\lambda, \beta^\prime)^\prime\in\Theta= \Lambda \times \Re^k$, with $\Lambda$ not necessarily compact but obeying Assumption \ref{ass:Smatrix} below. Define the approximate null hypothesis as
	\begin{equation}\label{approximate_null}
		\mathcal{H}_{0A}: \ \alpha_i=0 \ \ \forall i=1,\ldots,p,
	\end{equation} and the $n\times 1$ vector
	\begin{equation}\label{Spdef}
		S_p(\lambda,\alpha, y)= y-\lambda Wy - \left(\sum_{j=1}^p\alpha_j\psi_j(w_1'y),\ldots,\sum_{j=1}^p\alpha_j\psi_j(w_n'y)\right)'.
	\end{equation}
	Under $\mathcal{H}_{0A}$, we have $S_p(\lambda,0_{p\times 1}, y)= y - \lambda Wy$, consistent with the linear SAR model.


	Our test statistic is based on determining if the moment conditions for the instrumental variables (IV) estimate of $\theta$ under the null hypothesis are close enough to zero. We allow for over-identification, and thus we refer to our estimation method as Two-Stage Least Squares (2SLS) henceforth, see e.g. \cite{kelejian1998generalized}. Given  the expansion in (\ref{lambda_prime}), the approximate IV objective function is
	\begin{equation}\label{2slsobj}
		\mathcal{Q}_p(\lambda, \alpha,\beta, y)= \frac{1}{n}\left( S_p(\lambda, \alpha, y)-X\beta\right)^\prime P_Z\left( S_p(\lambda, \alpha, y)-X\beta\right),
	\end{equation}
	where $Z$ is a $n\times m$ matrix of valid instruments, with $m\geq p+k+1$, $m\sim p$, and $P_Z= Z(Z^\prime Z)^{-1} Z^\prime$.  We will specify the components of the instrument matrix in more detail in the next section. Now, for each $j=1,\ldots,p$, define the $n\times 1$ vector $\Upsilon_j(y)= \left(\psi_j(w_1'y),\ldots,\psi_j(w_n'y)\right)'$
	and the $n\times (p+ k+1) $ matrix  $U= \begin{pmatrix} \Upsilon_{1}(y)& \ldots& \Upsilon_p(y)& Wy & X \end{pmatrix}$ with elements $u_{ij}$. 
	
	To construct our test statistic, we will now introduce the gradient of (\ref{2slsobj}) and its covariance matrix. Define the $(p + k+1)\times 1$ gradient vector $\tilde{d}(\lambda, \beta, y)$ under $\mathcal{H}_{0A}$ as
	\begin{equation}\label{d}
		\tilde{d}(\lambda, \beta, y)= \left.\frac{\partial \mathcal{Q}(\lambda,\alpha, \beta, y)}{\partial (\lambda,\alpha,  \beta)^\prime } \right\vert_{\alpha=0}= -\frac{2}{n}U^\prime P_Z  (y - \lambda W y - X\beta).
	\end{equation}
	We denote by  $\hat{\theta}=(\hat{\lambda}, \hat{\beta}^\prime)^\prime$ the 2SLS estimate of $\theta$ under $\mathcal{H}_{0A}$. Then the gradient evaluated at the residuals corresponding to $\hat\theta$ is
	\begin{equation}\label{dhat}
		\hat{d}= \tilde{d}\left(\hat{\lambda}, \hat{\beta}, y\right)=  -\frac{2}{n}U^\prime P_Z  (y-\hat{\lambda}Wy - X\hat{\beta}).
	\end{equation}
	Likewise, let $\hat{\Sigma}$ be the diagonal matrix with diagonal elements $\hat{\epsilon}_i^2= \left(y_i- \hat{\lambda} \sum_j w_{ij}y_j -x_i^\prime\hat{\beta}\right)^2$, for $i=1,\ldots,n$, 
	$, 
	\hat{J}=n^{-1}Z^\prime U$, where $\hat{J}$ is an $m \times (p+k+1) $, and define the two $m\times m$ matrices $\hat{M}= n^{-1}{Z^\prime Z}$, $\hat{\Omega}=n^{-1}{Z^\prime \hat{\Sigma} Z}.$ The covariance matrix of the gradient evaluated at the estimates is
	\begin{equation}\label{Hhat}
		\hat{H}=\hat{H}\left(\hat{\lambda}, \hat{\beta}, y\right)= 4 \hat{J}^\prime \hat{M}^{-1}\hat{\Omega} \hat{M}^{-1} \hat{J},
	\end{equation}
	and we thence define
	our test statistic as
	\begin{equation}\label{statistic}
		\mathcal{T}= \frac{n\hat{d}^\prime \hat{H}^{-1} \hat{d} - p}{\sqrt{2p}}.
	\end{equation}
	This is a weighted measure of the distance of the gradient from zero, centred and rescaled to account for $p\rightarrow\infty$. 
	

	\section{Asymptotic theory}\label{sec:asymptotics}
	We commence this section by introducing some technical assumptions to establish the limiting behaviour of (\ref{statistic}) under $\mathcal{H}_{0A}$.
	\begin{assumption}\label{ass:errors}
		For all $n$, $\epsilon _{i}$ 
		are independent random
		variables with zero mean and unknown variance $\sigma_{i} ^{2}\in [k,K] $, $k>0$, and,
		for some $\delta >0,$ $ \mathbb{E}\ |\epsilon _{i}|^{4+\delta }\leq K$ for $i=1,\ldots,n$.
	\end{assumption}
	
	\begin{assumption}\label{ass:regressors}  $ \mathbb{E}(x_{il}^4 )\leq K$, for $i=1,\ldots,n$ and $l=1,\ldots,k$.
	\end{assumption}
	
	The possibility that $cov(\epsilon_i, x_{ij}) \neq 0$, for some $j=1,\ldots, k$, is not ruled out by Assumptions \ref{ass:errors} and \ref{ass:regressors} and thus $X$ might contain some endogenous columns for which external instruments would be needed in $Z$. We denote by $X_1$ the $n \times k_1$ matrix containing the subset of exogenous columns of $X$, while $X_2$ ($n\times k_2$, with $k_2=k-k_1$) contains the  endogenous ones.
	
	Now, for a generic symmetric positive-definite matrix $A$, let $\eigbig(A)$ and $\eigsmall(A)$ denote its largest and smallest eigenvalues, respectively. For a generic matrix $B$, denote by $\left\Vert B\right\Vert=\sqrt{\eigbig(B'B)}$, i.e. the spectral norm of $B$, and by  $\left\Vert B\right\Vert_\infty$ its largest absolute row sum.
	
	\begin{assumption}\label{ass:Wmore}
		(i) \textit{For all} $n$\textit{, }$w_{ii}=0$.  (ii) \textit{For all sufficiently large} $n$, $\left\Vert W\right\Vert_{\infty}+
		\left\Vert W^{\prime }\right\Vert_{\infty}\leq K$. 
	\end{assumption}
	
	\begin{assumption}\label{ass:Smatrix} \textit{For all sufficiently large} $n$, $%
		\underset{\lambda \in \Lambda}{\text{sup}} \left(\left\Vert(I -\lambda W)^{-1}\right\Vert_{%
			\infty}+\left\Vert(I-\lambda W^\prime)^{-1}\right\Vert_{\infty}\right) \leq K$.
	\end{assumption}
	
	
	\begin{assumption}\label{ass:eigsandinsts} The $m\times m$ matrix $M=\mathbb{E}(\hat{M})$, with $m\geq k+p+1$ , satisfies 
		\begin{equation}\label{ass_6_a}
			\limsup_{n\rightarrow\infty}\eigbig(M) <\infty ,\;\;\; \liminf_{n\rightarrow\infty}\eigsmall\left(M \right) >0
		\end{equation}
		and the $(p+k+1) \times (p+k+1)$ matrix  
		$L=n^{-1}\mathbb{E}(U'U)$ satisfies
		\begin{equation}\label{ass_6_b}
			\limsup_{n\rightarrow\infty}\eigbig(L) <\infty ,\;\;\; \liminf_{n\rightarrow\infty}\eigsmall\left(L \right) >0
		\end{equation}
		for $n$ large enough. For some $\nu>0$ satisfying $n/p^{(\nu + 1/2)}=o(1)$, $\sup_z r(z)=O_p\left(p^{-\nu} \right)$ as $p\rightarrow\infty$.  $\mathbb{E}\left(z_{il_1}^4\right)+\mathbb{E}\left(u_{il_2}^4\right)\leq K$ for $i=1,\ldots,n, l_1=1,\ldots,m$ and $l_2=1,\ldots,p+k+1$, and $\epsilon_i$ and $z_j$ are uncorrelated for each $i,j=1,\ldots,n$.
	\end{assumption}
	
\noindent	Assumptions \ref{ass:Wmore} and \ref{ass:Smatrix} are rather standard restrictions on the spatial weight matrix, helping to control spatial dependence, see e.g. \cite{lee2002consistency, lee2004asymptotic}. Assumption \ref{ass:eigsandinsts} imposes regularity on model primitives and the approximation error and permits conditional heteroskedasticity. In particular, (\ref{ass_6_b}) implies that $\hat{J}$ has full rank $p+k+1$ for sufficiently large $n$, while (\ref{ass_6_a}) is a standard asymptotic boundedness and no-collinearity condition. Our instrument set $Z$ contains, together with at least $k_2$ columns of instruments for the endogenous covariates $X_2$, the columns also of $X_1$, $WX_1$ and a set of instruments $z_{ij}$, $i=1,\ldots,n$. These would be of the form $\psi_q\left(\sum_j w_{ij}x_{1, jl} \right)$, where $q=1,\ldots,p$, and $x_{1, jl}$ denotes the $(j,l)$th element of $X_1$, with $l=1,\ldots,k_1$. Thus, uncorrelatedness between $\epsilon_i$ and $z_j$  requires that $\epsilon_i$ and $x_{1,jl}$ are uncorrelated for all $i,j=1,\ldots, n$ and $l=1,\ldots,k_1$. For more discussion on approximation error decay rates see e.g. \cite{Chen2007}. Finally, we impose
	\begin{assumption}\label{ass:Mhat}
		Let
		\begin{eqnarray*}
			\Delta_{Z'Z} &=& \underset{1\leq l,k \leq m} {\sup} \underset{j\neq i}{\underset{i=1}{\overset{n}\sum} {{\underset{j=1}{\overset{n}\sum}}}}\left\vert cov(z_{il}z_{ik},z_{jl}z_{jk}) \right\vert,\\
			\Delta_{Z'U} &=& \underset{1 \leq l \leq m,\;\; 1\leq r \leq p+k+1} {\sup} \underset{j\neq i}{\underset{i=1}{\overset{n}\sum} {{\underset{j=1}{\overset{n}\sum}}}}\left\vert cov(z_{il}u_{ir},z_{jl}u_{jr})\right\vert,
		\end{eqnarray*}
		and assume 
		\begin{equation}\label{delta_cond}
			\Delta_{Z'Z}+\Delta_{Z'U} =O(n), \ \ \ \text{as} \ \ \ n\rightarrow \infty.
		\end{equation}
	\end{assumption}
	\noindent Assumption \ref{ass:Mhat} guarantees that a type of weak law of large numbers holds and enforces a suitable bound on cross-sectional dependence in the spirit of \cite{Lee2016}. This can be checked under a variety of conditions such as linear process representations for the underlying random variables or with the near epoch dependence conditions of \cite{Jenish2012}. For example, if the $\{z_{il}\}$ are $L_4$-NED then the $\{z_{il}z_{ik}\}$ would be $L_2$-NED and weak dependence can be guaranteed by a suitable assumption on the NED coefficients. Writing $J=E(\hat J)$, we can now state the following result that is analogous to but stronger than a weak law of large numbers.
	
	\begin{lemma}\label{lemma:Mhat} 
		Let $p^2/n\rightarrow 0$ as $n\rightarrow\infty$ and suppose that Assumptions \ref{ass:regressors}, \ref{ass:eigsandinsts} and \ref{ass:Mhat} hold with $\nu>3/2$. Then, as $n\rightarrow \infty$, 
		\begin{equation}\label{MhatJhat}
			\left\Vert \hat{M} - M \right\Vert =O_p\left(\frac{p}{\sqrt{n}}\right), 		\left\Vert \hat{J} - J \right\Vert =O_p\left(\frac{p}{\sqrt{n}}\right).
		\end{equation}	
	\end{lemma}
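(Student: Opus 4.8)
The plan is to control both spectral-norm deviations by passing to the Frobenius norm, for which the target rate $p/\sqrt{n}$ follows from a uniform entrywise variance bound combined with the cross-sectional dependence control of Assumption \ref{ass:Mhat}. Since for any matrix $B$ one has $\left\Vert B\right\Vert\leq\left\Vert B\right\Vert_F$, where $\left\Vert B\right\Vert_F^2=\sum_{l,k}B_{lk}^2$, it suffices to bound $\mathbb{E}\left\Vert \hat M - M\right\Vert_F^2$ and $\mathbb{E}\left\Vert \hat J - J\right\Vert_F^2$. Because $M=\mathbb{E}(\hat M)$ and $J=\mathbb{E}(\hat J)$, each of these expectations is exactly the sum of the variances of the individual entries of $\hat M$ and $\hat J$. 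I would therefore first establish the uniform entrywise bounds $\max_{l,k}\mathrm{Var}(\hat M_{lk})=O(n^{-1})$ and $\max_{l,r}\mathrm{Var}(\hat J_{lr})=O(n^{-1})$, and then simply sum over the $m^2$ entries of $\hat M - M$ and the $m(p+k+1)$ entries of $\hat J - J$.

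The crux is the entrywise variance bound, obtained from a diagonal/off-diagonal decomposition. Writing $\hat M_{lk}=n^{-1}\sum_{i=1}^n z_{il}z_{ik}$, we have
\begin{equation*}
\mathrm{Var}\left(\hat M_{lk}\right)=n^{-2}\left(\sum_{i=1}^n\mathrm{Var}\left(z_{il}z_{ik}\right)+\sum_{i\neq j}\mathrm{Cov}\left(z_{il}z_{ik},z_{jl}z_{jk}\right)\right).
\end{equation*}
For the diagonal sum, Cauchy--Schwarz and the fourth-moment bound $\mathbb{E}(z_{il}^4)\leq K$ in Assumption \ref{ass:eigsandinsts} give $\mathbb{E}(z_{il}^2 z_{ik}^2)\leq K$, so this term is $O(n)$ uniformly in $l,k$. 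For the off-diagonal sum I would match it to $\Delta_{Z'Z}$: expanding $\left(\sum_i\mathbb{E}(z_{il}z_{ik})\right)^2$ into its diagonal and off-diagonal parts shows that, for each fixed $(l,k)$, the bracketed quantity defining $\Delta_{Z'Z}$ equals $\sum_{i\neq j}\mathrm{Cov}(z_{il}z_{ik},z_{jl}z_{jk})-\sum_i\left(\mathbb{E}(z_{il}z_{ik})\right)^2$, whence
\begin{equation*}
\sum_{i\neq j}\mathrm{Cov}\left(z_{il}z_{ik},z_{jl}z_{jk}\right)\leq \Delta_{Z'Z}+\sum_{i=1}^n\left(\mathbb{E}(z_{il}z_{ik})\right)^2=O(n),
\end{equation*}
the last step again by Cauchy--Schwarz, $\mathbb{E}(z_{il}^4)\leq K$, and the hypothesis $\Delta_{Z'Z}=O(n)$. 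This yields $\mathrm{Var}(\hat M_{lk})=n^{-2}O(n)=O(n^{-1})$ uniformly. The argument for $\hat J_{lr}=n^{-1}\sum_i z_{il}u_{ir}$ is identical, using $\mathbb{E}(u_{ir}^4)\leq K$ (with Assumption \ref{ass:regressors} covering the columns of $U$ equal to $X$) for the diagonal and $\Delta_{Z'U}=O(n)$ for the off-diagonal part.

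Summing the uniform bound over entries gives $\mathbb{E}\left\Vert \hat M - M\right\Vert_F^2=\sum_{l,k=1}^m\mathrm{Var}(\hat M_{lk})=m^2\,O(n^{-1})=O(p^2/n)$ since $m\sim p$, and likewise $\mathbb{E}\left\Vert \hat J - J\right\Vert_F^2=m(p+k+1)\,O(n^{-1})=O(p^2/n)$. Markov's inequality then delivers $\left\Vert \hat M - M\right\Vert_F=O_p(p/\sqrt{n})$ and $\left\Vert \hat J - J\right\Vert_F=O_p(p/\sqrt{n})$, and the spectral-norm bounds in \eqref{MhatJhat} follow from $\left\Vert\cdot\right\Vert\leq\left\Vert\cdot\right\Vert_F$; the hypothesis $p^2/n\to0$ additionally makes both deviations $o_p(1)$.

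I expect the main obstacle to be the bookkeeping in the off-diagonal step rather than any deep probabilistic argument: $\Delta_{Z'Z}$ as defined subtracts the full product of marginal expectations, not precisely the $i\neq j$ piece, so one must carefully reintroduce the diagonal of the mean-product term, verify it is absorbed within the $O(n)$ budget, and avoid double counting. A secondary point to check is that the Frobenius majorization, though lossy by a factor up to $\sqrt{p}$ relative to a sharp spectral bound, is adequate precisely because the asserted rate is $p/\sqrt{n}$ and not sharper. Finally, the strengthened condition $\nu>3/2$ enters only through Assumption \ref{ass:eigsandinsts}; since $M$ and $J$ are defined as the exact means of the observed $\hat M$ and $\hat J$, I would verify that the approximation error $r$ embedded in the $U$-columns via $y$ is already accommodated by the assumed fourth-moment bounds and hence does not bind on the per-entry variances, so that $\nu>3/2$ plays no active role in the stated rate here.
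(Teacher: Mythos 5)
Your proposal is correct and follows essentially the same route as the paper's own proof: the paper likewise bounds the (Frobenius-type) second moment of $\hat M - M$ by $\tfrac{m^2}{n^2}\Delta_{Z'Z}+\tfrac{m^2}{n^2}\sup_{l,k}\sum_i\mathbb{E}(z_{il}^2z_{ik}^2)$, handles the diagonal term by Cauchy--Schwarz and the fourth-moment bounds, uses $m\sim p$ to get $O(p^2/n)$, and treats $\hat J - J$ identically. Your explicit bookkeeping reconciling $\Delta_{Z'Z}$ with the off-diagonal covariance sum, and your observation that $\nu>3/2$ plays no active role here, are both consistent with (and slightly more careful than) the paper's terse argument.
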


	\subsection{Null distribution}\label{subsec:null}

	In this section we develop the asymptotic theory to establish the distribution of the test statistic in (\ref{statistic}) under $\mathcal{H}_{0A}$. Our null asymptotic theory comprises of approximating the test statistic $\mathcal{T}$ with a quadratic form in $\epsilon$, weighted by population quantities. We will then show that this approximation is asymptotically standard normal.  
	To start, we define the population quantities corresponding to (\ref{dhat}), or to its equivalent representation in (\ref{dhat_equiv}), 
	and to (\ref{Hhat}) under $\mathcal{H}_{0A}$ as
	\begin{align}
		d=& -\frac{2}{n} J^\prime M^{-1/2}\left(I - M^{-1/2} N\left(N^\prime M^{-1} N \right)^{-1} N^\prime M^{-1/2} \right) M^{-1/2} Z^\prime \epsilon \notag \\=& -\frac{2}{n} J^\prime M^{-1/2}\mathcal{M}_{NM} M^{-1/2} Z^\prime \epsilon, 
	\end{align}
	where $\mathcal{M}_{NM}= \left(I - M^{-1/2} N\left(N^\prime M^{-1} N \right)^{-1} N^\prime M^{-1/2} \right)$ is $m\times m$ and $N= \mathbb{E}(\hat{N})$, with $\hat{N}= n^{-1} Z^{\prime}(Wy, X)$, which is an $m \times (k+1)$ matrix with full rank under (\ref{ass_6_b}) in Assumption \ref{ass:eigsandinsts},
	and 
	\begin{align}\label{H}
		H= n\mathbb{E}(d d^\prime)  = 4 J^\prime M^{-1/2} \mathcal{M}_{NM} M^{-1/2}\Omega M^{-1/2} \mathcal{M}_{NM} M^{-1/2} J,
	\end{align}
	with $\Omega= n^{-1}\mathbb{E}(Z^\prime \Sigma Z)$ and $\Sigma$ the $n\times n$ diagonal matrix with diagonal given by $\sigma_i^2, i=1,\ldots,n$. Under Assumptions \ref{ass:errors} and \ref{ass:eigsandinsts}, $H^{-1}$ exists and is non-singular for $n$ large enough, via the following lemma for the eigenvalues of $\Omega$. 
	
	\begin{lemma}\label{lemma:Omegaeigs}
		Under Assumptions \ref{ass:errors} and \ref{ass:eigsandinsts}, 
		\[
		\limsup_{n\rightarrow\infty}\eigbig(\Omega)<\infty \text{ and } \liminf_{n\rightarrow\infty}\eigsmall(\Omega)>0.
		\]
	\end{lemma}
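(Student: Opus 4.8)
The plan is to exploit the fact that $\Omega$ differs from $M$ only through the diagonal weighting by the variances $\sigma_i^2$, which Assumption \ref{ass:errors} confines to the interval $[k,K]$ with $k>0$. This lets me bracket $\Omega$ between $kM$ and $KM$ in the positive semidefinite (Loewner) order and then transfer the eigenvalue bounds on $M$ supplied by (\ref{ass_6_a}) in Assumption \ref{ass:eigsandinsts} to $\Omega$. The argument is essentially a sandwiching computation; no delicate probabilistic step is needed.

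Concretely, let $z_i^\prime$ denote the $i$-th row of $Z$, so that $Z^\prime\Sigma Z=\sum_{i=1}^n \sigma_i^2 z_i z_i^\prime$ and $Z^\prime Z=\sum_{i=1}^n z_i z_i^\prime$. For any fixed $v\in\Re^m$ one has $v^\prime(Z^\prime\Sigma Z)v=\sum_{i=1}^n \sigma_i^2 (z_i^\prime v)^2$, and since each summand $(z_i^\prime v)^2$ is nonnegative while $\sigma_i^2\in[k,K]$, this yields the pathwise two-sided bound
\begin{equation*}
k\,v^\prime(Z^\prime Z)v \;\leq\; v^\prime(Z^\prime\Sigma Z)v \;\leq\; K\,v^\prime(Z^\prime Z)v.
\end{equation*}
Taking expectations and dividing by $n$ preserves the inequalities, giving $k\,v^\prime M v\leq v^\prime \Omega v\leq K\,v^\prime M v$ for every $v$. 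I note in passing that this argument is insensitive to whether $\Sigma$ is treated as deterministic or as a matrix of conditional variances, since the bracketing holds \emph{before} the expectation is taken; hence the conditional heteroskedasticity permitted by the model is accommodated automatically.

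From the Rayleigh quotient characterization of the extreme eigenvalues, the operator inequalities $kM\preceq\Omega\preceq KM$ immediately imply
\begin{equation*}
\eigsmall(\Omega)\;\geq\; k\,\eigsmall(M) \qquad\text{and}\qquad \eigbig(\Omega)\;\leq\; K\,\eigbig(M).
\end{equation*}
Combining these with (\ref{ass_6_a}), namely $\liminf_n \eigsmall(M)>0$ and $\limsup_n \eigbig(M)<\infty$, together with $k>0$ and $K<\infty$, delivers $\liminf_n\eigsmall(\Omega)\geq k\liminf_n\eigsmall(M)>0$ and $\limsup_n\eigbig(\Omega)\leq K\limsup_n\eigbig(M)<\infty$, which is the claim.

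Since the computation is mechanical, there is no genuine obstacle; the only point requiring care is the clean passage from the quadratic-form (Loewner) bound to the eigenvalue bound, which relies on $M$ being positive definite for large $n$ so that the lower sandwich is non-degenerate. This positive definiteness is precisely what (\ref{ass_6_a}) guarantees, so invoking Assumption \ref{ass:eigsandinsts} at exactly this juncture closes the argument.
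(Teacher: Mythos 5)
Your proposal is correct and follows essentially the same route as the paper: both arguments factor out $\sup_i\sigma_i^2$ and $\inf_i\sigma_i^2$ from the quadratic form $Z'\Sigma Z=\sum_i\sigma_i^2 z_iz_i'$ before taking expectations, and then transfer the eigenvalue bounds from $M$ via (\ref{ass_6_a}). Your write-up merely makes the Loewner-order sandwiching explicit where the paper states the norm and eigenvalue inequalities directly.
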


	\noindent Also, by construction, the $m\times m$ matrix $\mathcal{M}_{NM}$ has rank $m-k-1$,  which grows like $m$ as $n \rightarrow \infty$ because $k$ is fixed. By relying on the auxiliary Theorem \ref{theorem:dhatd}, reported in Appendix A, we can show that $\mathcal{T}$ can be approximated by a quadratic form in $\epsilon$, as desired. Indeed, we derive
	\begin{theorem}\label{theorem:approx}
		Under Assumptions \ref{ass:errors}-\ref{ass:Mhat},  under $\mathcal{H}_{0A}$ in (\ref{approximate_null}), $\nu>5/2$ and $p^3/n =o(1)$,
		\begin{equation}
			\mathcal{T} - \frac{n d^\prime H^{-1} d - p}{\sqrt{2p}} = o_p(1), \text{ as } n\rightarrow \infty. 
		\end{equation}
	\end{theorem}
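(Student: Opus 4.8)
The plan is to show that the numerator of $\mathcal{T}$ satisfies $n\hat{d}^\prime\hat{H}^{-1}\hat{d}-nd^\prime H^{-1}d=o_p(\sqrt{p})$, so that after division by $\sqrt{2p}$ the two standardized statistics coincide asymptotically. The first step is to record a convenient linear representation of $\hat{d}$. Because $\hat\theta$ solves the restricted 2SLS problem, the first-order conditions force the $(\lambda,\beta)$-block of $\hat{d}$ to vanish identically; the same is true of the population $d$, since $N^\prime M^{-1/2}\mathcal{M}_{NM}=0$ by construction of the projector $\mathcal{M}_{NM}$. Hence both $\hat{d}$ and $d$ are supported on their first $p$ coordinates and the statistic is effectively a $p$-dimensional quadratic form in the $\Upsilon_j(y)$ directions. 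Substituting the restricted residual into (\ref{dhat}) and using the equivalent representation (\ref{dhat_equiv}), I would write $\hat{d}=-\tfrac{2}{n}\hat{J}^\prime\hat{M}^{-1/2}\mathcal{M}_{\hat N\hat M}\hat{M}^{-1/2}Z^\prime(\epsilon+r)$, where the systematic part $\hat N\theta_0$ is annihilated by the empirical projector and $r$ is the vector of sieve remainders with $\sup_z|r(z)|=O_p(p^{-\nu})$.

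Next I would pass from $\hat{d}$ to its population counterpart $d$ with the aid of the auxiliary Theorem \ref{theorem:dhatd}, decomposing $\hat{d}-d=(\hat{d}_\epsilon-d)+\hat{d}_r$. The estimation-error piece $\hat{d}_\epsilon-d$ arises from replacing $(\hat{J},\hat{M},\hat N)$ by $(J,M,N)$ inside the projected linear form; bounding it requires Lemma \ref{lemma:Mhat}, its analogues for $\hat\Omega$ and $\hat N$, and standard perturbation inequalities for inverses, square roots and for the projector $\mathcal{M}_{\hat N\hat M}$, each factor contributing the rate $O_p(p/\sqrt{n})$. Since $\|n^{-1}Z^\prime\epsilon\|=O_p(\sqrt{p/n})$ this gives $\|\hat{d}_\epsilon-d\|=O_p(p^{3/2}/n)$, while the approximation piece obeys $\|\hat{d}_r\|=O_p(p^{1/2-\nu})$ by the uniform bound on $r$.

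The core of the argument is to exchange $\hat{H}^{-1}$ for $H^{-1}$ and to assemble the quadratic form through the exact identity
\[
n\hat{d}^\prime\hat{H}^{-1}\hat{d}-nd^\prime H^{-1}d=nd^\prime(\hat{H}^{-1}-H^{-1})d+2nd^\prime\hat{H}^{-1}(\hat{d}-d)+n(\hat{d}-d)^\prime\hat{H}^{-1}(\hat{d}-d).
\]
The decisive input is that $nd^\prime H^{-1}d=O_p(p)$: its mean equals $p$ and its variance is $O(p)$, and the eigenvalue bounds of Lemma \ref{lemma:Omegaeigs} and Assumption \ref{ass:eigsandinsts} ensure $\|\hat{H}^{-1}\|=O_p(1)$ on the relevant subspace, so that $nd^\prime\hat{H}^{-1}d=O_p(p)$ as well. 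The last two terms are then handled by Cauchy--Schwarz in the $\hat{H}^{-1}$ metric together with the bounds on $\|\hat{d}-d\|$, giving contributions of order $O_p(p^2/\sqrt{n})+O_p(\sqrt{n}\,p^{1-\nu})$ and $O_p(p^3/n)+O_p(np^{1-2\nu})$ respectively. For the first term I would use $\hat{H}^{-1}-H^{-1}=-\hat{H}^{-1}(\hat{H}-H)H^{-1}$, whence $|nd^\prime(\hat{H}^{-1}-H^{-1})d|=O_p(p\,\|\hat{H}-H\|)=O_p(p^2/\sqrt{n})$, using $\|\hat{H}-H\|=O_p(p/\sqrt{n})$ from Lemma \ref{lemma:Mhat} and perturbation expansions of the products defining $\hat{H}$. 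Every contribution is $o_p(\sqrt{p})$ precisely under $p^3/n=o(1)$ and $\nu>5/2$.

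I expect the main obstacle to be the control of the growing-dimension inverse in the first term. Because $nd^\prime H^{-1}d$ is of exact order $p$ rather than $O(1)$, the perturbation $nd^\prime(\hat{H}^{-1}-H^{-1})d$ is negligible relative to $\sqrt{p}$ only when $\|\hat{H}-H\|=o_p(p^{-1/2})$, and since the natural rate is $\|\hat{H}-H\|=O_p(p/\sqrt{n})$ this is exactly the requirement $p^3/n\to0$. A second delicate point is that the first-order conditions reduce the statistic to the $(\lambda,\beta)$-Schur complement of $\hat{H}$, so one must verify that the empirical weighting, with its induced projector $\mathcal{M}_{\hat N\hat M}$, is consistent in spectral norm and at the rate $O_p(p/\sqrt{n})$ for the correctly projected heteroskedasticity-robust variance $H$ in (\ref{H}), uniformly across the $p$ diverging directions; this is where the conditional-heteroskedasticity-robust structure of the test must be preserved.
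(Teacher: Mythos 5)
Your proposal is correct and follows essentially the same route as the paper's proof: reduce the claim to $n\hat{d}^\prime\hat{H}^{-1}\hat{d}-nd^\prime H^{-1}d=o_p(\sqrt{p})$, control $\Vert\hat{d}-d\Vert=O_p(p^{3/2}/n)$ via the auxiliary Theorem \ref{theorem:dhatd}, establish $\Vert\hat{H}-H\Vert=O_p(p/\sqrt{n})$ by perturbation arguments resting on Lemma \ref{lemma:Mhat} and the eigenvalue bounds, and combine these with $\Vert d\Vert,\Vert\hat{d}\Vert=O_p(\sqrt{p/n})$ and $\Vert\hat{H}^{-1}\Vert,\Vert H^{-1}\Vert=O_p(1)$ so that the dominant term $O_p(p^2/\sqrt{n})$ is $o_p(\sqrt p)$ exactly under $p^3/n=o(1)$. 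Your three-term exact decomposition is algebraically a different but equivalent split from the one used in the paper, and all of your claimed rates check out.
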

	\noindent Thence, we are now ready to prove the main result of this section, which is the asymptotic standard normality of $\mathcal{T}$ under the null hypothesis.
	\begin{theorem}\label{theorem:nulldist}
		Under $\mathcal{H}_0$, Assumptions \ref{ass:errors}-\ref{ass:Mhat}, $\nu>5/2$ and $p^3/n =o(1)$,
		\begin{equation}
			\mathcal{T}\overset{d}{\rightarrow} N(0,1), \text{ as } n\rightarrow \infty.
		\end{equation}
	\end{theorem}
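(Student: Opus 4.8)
The plan is to prove the result in two stages: reduce $\mathcal{T}$ to an explicit quadratic form in the disturbances by invoking the already-established Theorem \ref{theorem:approx}, and then apply a central limit theorem for quadratic forms whose weighting matrix has diverging rank $p$. By Theorem \ref{theorem:approx}, under $\mathcal{H}_0$ (which entails $\mathcal{H}_{0A}$ with zero approximation error), $\mathcal{T}=(n d^\prime H^{-1}d-p)/\sqrt{2p}+o_p(1)$, so by Slutsky it suffices to show that the population surrogate $\mathcal{T}^\ast:=(n d^\prime H^{-1}d-p)/\sqrt{2p}$ converges to $N(0,1)$. First I would record the structural fact that only $p$ coordinates of $d$ are non-degenerate: since $N$ consists of the last $k+1$ columns of $J$ and $N^\prime M^{-1/2}\mathcal{M}_{NM}=0$ by the projection identity defining $\mathcal{M}_{NM}$, the last $k+1$ entries of $d=-\tfrac{2}{n}J^\prime M^{-1/2}\mathcal{M}_{NM}M^{-1/2}Z^\prime\epsilon$ vanish, mirroring the 2SLS first-order conditions that annihilate the gradient in the $(\lambda,\beta)$ directions. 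Writing $G_\star=M^{-1/2}\mathcal{M}_{NM}M^{-1/2}J_\Upsilon$ for the surviving $m\times p$ block (with $J_\Upsilon$ the first $p$, i.e. $\Upsilon$, columns of $J$) and using that the corresponding block of $H$ equals $4G_\star^\prime\Omega G_\star$ (invertible by Lemma \ref{lemma:Omegaeigs} and the rank condition \eqref{ass_6_b}), elementary algebra gives $n d^\prime H^{-1}d=\xi^\prime\Pi_n\xi$, where $\xi=\Sigma^{-1/2}\epsilon$ is the standardized disturbance vector and $\Pi_n=n^{-1}\Sigma^{1/2}ZG_\star(G_\star^\prime\Omega G_\star)^{-1}G_\star^\prime Z^\prime\Sigma^{1/2}$ is symmetric and positive semidefinite.

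Next I would establish that $\Pi_n$ is asymptotically idempotent of rank $p$, which is exactly what the centering by $p$ and scaling by $\sqrt{2p}$ demand. Because $\Omega=n^{-1}\mathbb{E}(Z^\prime\Sigma Z)$, replacing $n^{-1}Z^\prime\Sigma Z$ by $\Omega$ makes the middle factor collapse to $\Pi_n^2=\Pi_n$, with $\mathrm{tr}\,\Pi_n=\mathrm{tr}\,\Pi_n^2=p$. Quantifying the deviation of $n^{-1}Z^\prime\Sigma Z$ from $\Omega$ with concentration bounds in the spirit of Lemma \ref{lemma:Mhat} (licensed by $p^3/n\to0$ and $\nu>5/2$) I would show $\mathrm{tr}\,\Pi_n=p+o_p(\sqrt{p})$ and $\mathrm{tr}\,\Pi_n^2=p+o_p(p)$, so the bias $(\mathrm{tr}\,\Pi_n-p)/\sqrt{2p}$ is negligible and $\mathrm{tr}\,\Pi_n^2/p\to_p 1$. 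The central limit step then works conditionally on the exogenous design, under which the $\xi_i$ are independent, mean zero, unit variance, with uniformly bounded $4+\delta$ moments by Assumption \ref{ass:errors}. Decomposing $\xi^\prime\Pi_n\xi-\mathrm{tr}\,\Pi_n=\sum_i(\Pi_n)_{ii}(\xi_i^2-1)+2\sum_{i<j}(\Pi_n)_{ij}\xi_i\xi_j$, the diagonal part has variance $O(\sum_i(\Pi_n)_{ii}^2)\le O(\max_i(\Pi_n)_{ii}\cdot\mathrm{tr}\,\Pi_n)$, which is $o_p(p)$, hence $o_p(\sqrt p)$, once the spatial and instrument regularity (Assumptions \ref{ass:Wmore}--\ref{ass:eigsandinsts}) force the bounded-influence condition $\max_i(\Pi_n)_{ii}\to0$; this term drops out after division by $\sqrt{2p}$. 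To the degenerate off-diagonal part I would apply a martingale central limit theorem (a de Jong-type result for second-order $U$-statistics): the negligibility condition, here $\lambda_{\max}(\Pi_n)^2/\mathrm{tr}\,\Pi_n^2$ of order $1/p\to0$, together with the $4+\delta$ moments yields $(\xi^\prime\Pi_n\xi-\mathrm{tr}\,\Pi_n)/\sqrt{2\,\mathrm{tr}\,\Pi_n^2}\overset{d}{\rightarrow}N(0,1)$. Since the limit is free of the design, conditional convergence transfers to unconditional convergence, and combining with $\mathrm{tr}\,\Pi_n^2/p\to_p1$ and the negligible bias via Slutsky gives $\mathcal{T}^\ast\overset{d}{\rightarrow}N(0,1)$, hence $\mathcal{T}\overset{d}{\rightarrow}N(0,1)$.

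The hard part will be the central limit theorem for the off-diagonal form: verifying the bounded-influence and negligibility conditions uniformly for a triangular array whose weighting matrix $\Pi_n$ has diverging rank $p$, under conditional heteroskedasticity and only $4+\delta$ moments, while simultaneously controlling the gap between $n^{-1}Z^\prime\Sigma Z$ and $\Omega$ so that the approximate idempotency---and hence the exact $p$ and $2p$ normalizations---holds to $o_p(\sqrt{p})$ precision. The condition $\max_i(\Pi_n)_{ii}\to0$ is precisely where the spatial weight restrictions (Assumptions \ref{ass:Wmore}--\ref{ass:Smatrix}) and the instrument regularity (Assumption \ref{ass:eigsandinsts}) carry the argument.
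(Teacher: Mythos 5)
Your proposal is correct in substance and follows the same underlying route as the paper: reduce $\mathcal{T}$ to the population quadratic form via Theorem \ref{theorem:approx}, and then apply a CLT for a degenerate quadratic form whose weight matrix has diverging rank $p$. The difference is that the paper's own proof of this second step is a two-line citation (Remark 3 and Theorem 3.3 of Gupta, 2018c, with pointers to Korolev, 2019 and Robinson, 2008a for the heteroskedastic modification), whereas you spell the argument out: the projection identity $N^\prime M^{-1/2}\mathcal{M}_{NM}=0$ killing the last $k+1$ coordinates of $d$, the reduction to $\xi^\prime\Pi_n\xi$ with $\Pi_n$ asymptotically idempotent of rank $p$, the trace approximations, and the de Jong-type martingale CLT for the off-diagonal part. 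Your accounting is consistent with the paper's rate conditions: the deviation of $\mathrm{tr}\,\Pi_n$ from $p$ is driven by $\Vert n^{-1}Z^\prime\Sigma Z-\Omega\Vert=O_p(p/\sqrt{n})$ (as in the proof of Theorem \ref{theorem:approx}), giving $\mathrm{tr}\,\Pi_n-p=O_p(p^2/\sqrt{n})=o_p(\sqrt{p})$ exactly under $p^3/n=o(1)$. Two remarks. First, your observation that only the top-left $p\times p$ block of $H$ is nondegenerate is actually a point where you are more careful than the paper: $H$ as written in (\ref{H}) is singular (its last $k+1$ rows and columns vanish because $\mathcal{M}_{NM}M^{-1/2}N=0$), so $H^{-1}$ must be read as the inverse on the nonzero block, consistent with the paper's use of $\hat{H}^{11}$ in the power proofs; your $G_\star$ formulation handles this cleanly. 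Second, the conditions you correctly flag as the hard part --- $\max_i(\Pi_n)_{ii}\rightarrow 0$ so that the diagonal term $\sum_i(\Pi_n)_{ii}(\xi_i^2-1)$ is $o_p(\sqrt{p})$ without Gaussianity, and the negligibility condition for the martingale CLT --- are precisely the content of the cited external results; they are not re-verified in this paper either, so your proposal is no less complete than the paper's proof on that score.
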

	
	\noindent Theorem \ref{theorem:nulldist} provides asymptotic justification for using one-sided, standard normal critical values as observed also by \cite{Hong1995}. However, in practice, a user of the test may be faced with moderate values of $p$. In this situation, the results of the asymptotic test can be compared with a test that employs $(\chi^2_p - p)/\sqrt{2p}$ as the distribution from which its critical values are computed. The comparison will be studied by means of a Monte Carlo experiment in Section \ref{sec:MC}. Observe that one could equivalently just compare the $n\hat d'\hat H ^{-1}\hat d$ to critical values from a $\chi^2_p$ distribution in practice. The standardized $\chi^2_p$ version however allows us to compare the critical values to those from standard normal distribution.

	\subsection{Power properties: Consistency}\label{subsec:powercons}
	
	In this section we establish the consistency of our test, specifically against the alternative
	\begin{equation}
		\mathcal{H}_{1A}: \ \  \alpha_i \neq 0, \ \ \ \text{for some} \ i=1,\ldots,p.
	\end{equation}
	To examine power properties, we introduce the unrestricted quantities
	\begin{equation}\label{epsilonU_def}
		\epsilon_{Ui}(\alpha, \lambda, \beta)=  y_i - \alpha^\prime \begin{pmatrix}\psi_1(w_i^\prime y)\\
			\ldots\\
			\psi_p(w_i^\prime y)\end{pmatrix} - \beta^\prime x_i -\lambda \sum_{j=1}^{n} w_{ij} y_j \ \ \ \text{and} \ \ \ \  \tilde{\Omega}_{U}= \tilde{\Omega}_{U}(\alpha, \lambda, \beta)=n^{-1}{Z^\prime \tilde{\Sigma}_U Z},
	\end{equation}
	where  $\tilde{\Sigma}_{U}$ is a diagonal matrix with $\tilde{\Sigma}_{Uii}= \epsilon_{Ui}^2(\alpha, \lambda, \beta)$. Clearly, $\hat{\Omega}= \tilde{\Omega}_U(0_{p\times 1}, \hat{\lambda}, \hat{\beta})$. Let $\gamma = (\alpha', \lambda, \beta')'\in\Gamma= \Re^p \times \Lambda \times \Re^k$ and introduce:
	\begin{assumption}\label{ass:power} \textit{For all sufficiently large $n$ and all $j=1,\ldots,p+k+1$,}
		\begin{equation}\label{cond_omegaU_1}
			\underset{\gamma \in \Gamma} {\sup} \;\;\eigbig(\tilde{\Omega}_U) =O_p(1) ,\ \ \   \underset{\gamma \in \Gamma} {\sup}\;\; \eigbig\left(\frac{\partial\tilde{\Omega}_U}{\partial \gamma_j}\right) =O_p(1)  ,
		\end{equation}
		and
		\begin{equation}\label{cond_omegaU_2}
			\left\{ \underset{\gamma \in \Gamma} {\inf}\;\;\eigsmall(\tilde{\Omega}_U)\right\}^{-1}=O_p(1),  \ \ \  \left\{ \ \underset{\gamma \in \Gamma} {\inf}\;\;\eigsmall\left(\frac{\partial\tilde{\Omega}_U(\gamma)}{\partial\gamma_j}\right)\right\}^{-1}=O_p(1).
		\end{equation} 
	\end{assumption}

	%

		\noindent 	 This assumption places some mild regularity on behaviour under the sequence of alternatives $\mathcal{H}_{1A}$, reminiscent of standard boundedness and invertibility conditions. 	We can now state the main theorem of this section.
	

	\begin{theorem}\label{theorem:consistency}
		Under Assumptions \ref{ass:errors}-\ref{ass:power}, $\nu>5/2$ and $p^3/n =o(1)$, $\mathbb{P}(\mathcal{T}>C)\rightarrow 1$ for any $C>0$, as $n\rightarrow\infty$ under $\mathcal{H}_{1A}$.
	\end{theorem}

	\subsection{Power properties: Detection of local alternatives}
	
	We aim to assess the local power of our test by considering the sequence of local alternatives
	\begin{equation}\label{localalt}
		\mathcal{H}_{\ell}: \  \alpha= \alpha_{n}=\frac{p^{1/4}}{\sqrt{n}} \delta, 
	\end{equation}
	where $\delta= (\delta_1,\ldots., \delta_p)^\prime$ with $\Vert \delta \Vert =1$ and $\delta_j$ constants, and the $p^{1/4}$ factor accounts for the cost of the nonparametric approach; see e.g \cite{DeJong1994}, \cite{Hong1995} and \cite{Gupta2023} for a similar dampening factor.  We acknowledge that there are several ways to specify a sequence of local alternatives in the nonparametric testing literature. Our formulation builds on our use of an approximate null hypothesis and relies on local drifts in the series coefficients, as in \cite{Gupta2023}. Other formulations can be found in, \emph{inter alia}, \cite{Kuersteiner2019} and \cite{Anatolyev2023}.
	We also define 
	\begin{equation}
		\mathcal{V}= \begin{pmatrix} I_{p} \\ - (N^\prime M^{-1}N)^{-1} N^\prime M^{-1}\Xi \end{pmatrix} (4J^\prime M^{-1}\Omega M^{-1}J)^{11}\left(I_{p}\ ; \ - \Xi^\prime M^{-1} N (N^\prime M^{-1} N)^{-1} \right),
	\end{equation}
	where $\Xi= \mathbb{E}(\hat{\Xi})$, $\hat{\Xi}= n^{-1} Z^\prime \begin{pmatrix} \Upsilon_{1}(y)& \ldots& \Upsilon_p(y) \end{pmatrix}  $ and $A^{11}$ denotes the top-left $p\times p$ block of ${A}^{-1}$, a notational convention that we maintain for any generic $A^{-1}$ considered in the paper.
	
	\begin{theorem}\label{theorem:local}
		Under $\mathcal{H}_\ell$, Assumptions \ref{ass:errors}-\ref{ass:Mhat}, $\nu>5/2$ and $p^3/n =o(1)$, we have 
		\[
		\mathcal{T}\overset{d}{\rightarrow} N(\varrho,1), \text { as } n\rightarrow\infty,
		\] 
		where $\varrho=\lim_{n\rightarrow\infty}\delta^\prime \Xi^\prime M^{-1} J \mathcal{V} J^\prime M^{-1} \Xi \delta$ is assumed to exist and be positive. 
	\end{theorem}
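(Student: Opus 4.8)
The plan is to show that under $\mathcal{H}_\ell$ the statistic splits into the same mean-zero quadratic form that drives the null limit in Theorem \ref{theorem:nulldist}, plus a deterministic drift converging to $\varrho$, plus a cross term that is asymptotically negligible once divided by $\sqrt{2p}$.

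First I would record the effect of the drifting coefficients on the null 2SLS fit. Writing $\Psi=(\Upsilon_1(y),\ldots,\Upsilon_p(y))$ and $V=(Wy,X)$, under $\mathcal{H}_\ell$ the residuals carry the systematic term $\Psi\alpha_{0n}$ in addition to $\epsilon$ and the approximation error, while the first-order condition $V'P_Z\hat\epsilon=0$ continues to annihilate the $\theta$-block of $\hat d$. Propagating $\Psi\alpha_{0n}$ through $\hat d$ and through the estimation error $\hat\theta-\theta_0$, I would establish the local-alternative analogue of Theorem \ref{theorem:approx}, namely
\[
\mathcal{T}=\frac{n\,d_\ell' H^{-1} d_\ell-p}{\sqrt{2p}}+o_p(1),\qquad d_\ell=d+b_n,
\]
where $d$ is exactly the mean-zero population gradient of the null analysis and $b_n=-2J'M^{-1/2}\mathcal{M}_{NM}M^{-1/2}\Xi\,\alpha_{0n}$ is the deterministic drift generated by $\Psi\alpha_{0n}$ through $n^{-1}Z'\Psi\to\Xi$. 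Proving this replacement is where most of the labour sits: one re-runs the argument of Theorem \ref{theorem:approx} along the drifting sequence, using Lemma \ref{lemma:Mhat} to swap $\hat\Xi,\hat M,\hat\Omega,\hat J$ for population counterparts, and invoking $\nu>5/2$ and $p^3/n\to0$ to discard the approximation-error and remainder contributions at the $o_p(\sqrt p)$ level in the numerator. I would also note that the law of $\epsilon$ is unchanged under $\mathcal{H}_\ell$ and that the population weighting matrices differ from their null values by only $O(p^{1/4}/\sqrt n)$, so the substitution is legitimate and does not perturb the limits below.

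Expanding the quadratic form gives
\[
n\,d_\ell' H^{-1} d_\ell=\underbrace{n\,d' H^{-1} d}_{\mathrm{(I)}}+\underbrace{2n\,b_n' H^{-1} d}_{\mathrm{(II)}}+\underbrace{n\,b_n' H^{-1} b_n}_{\mathrm{(III)}}.
\]
Centred and scaled, term $\mathrm{(I)}$ converges to $N(0,1)$ by the central limit argument already proved for Theorem \ref{theorem:nulldist}. For the deterministic term $\mathrm{(III)}$, the identity $\mathcal{M}_{NM}M^{-1/2}N=0$ forces the $\theta$-block of $b_n$ to vanish, so $\mathrm{(III)}$ reduces to a quadratic form in the $\alpha$-block alone; substituting $\alpha_{0n}=p^{1/4}n^{-1/2}\delta$ makes it deterministic and of exact order $\sqrt p$, and the partitioned-inverse identity $(I_p\,;\,-\Xi'M^{-1}N(N'M^{-1}N)^{-1})\,J'M^{-1}\Xi=\Xi'M^{-1/2}\mathcal{M}_{NM}M^{-1/2}\Xi$ is precisely what recasts $\mathrm{(III)}/\sqrt{2p}$ into the compact form $\delta'\Xi'M^{-1}J\,\mathcal{V}\,J'M^{-1}\Xi\,\delta$, whose limit is the assumed finite and positive $\varrho$.

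The delicate point is term $\mathrm{(II)}$. A norm bound only yields $\mathrm{(II)}=O_p(p^{3/4})$, which would diverge after division by $\sqrt{2p}$; the remedy is that $\mathrm{(II)}$ is linear in $\epsilon$ and mean zero, so I would control it through its variance. Since $H=n\mathbb{E}(dd')$ gives $\mathrm{Var}(d)=n^{-1}H$, one finds $\mathrm{Var}\left(2n\,b_n' H^{-1} d\right)=4n\,b_n' H^{-1} b_n=4\,\mathrm{(III)}=O(\sqrt p)$, so that $\mathrm{(II)}/\sqrt{2p}$ has standard deviation $O(p^{-1/4})$ and is $o_p(1)$ by Chebyshev. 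Combining the three pieces through Slutsky yields $\mathcal{T}\overset{d}{\rightarrow}N(\varrho,1)$. The main obstacle is thus the local-alternative version of Theorem \ref{theorem:approx}: several remainder terms from estimation error, approximation error, and the sample-to-population matrix replacement are individually of borderline order $\sqrt p$ in the numerator, and one must verify that each of them—most delicately the interaction between the $O(p^{1/4}/\sqrt n)$ drift in $\hat\theta$ and the stochastic part of the score—is genuinely $o_p(\sqrt p)$ so that it disappears after the $\sqrt{2p}$ scaling.
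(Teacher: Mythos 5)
Your proposal is correct and follows essentially the same route as the paper: the same three-term decomposition (null quadratic form, deterministic drift of exact order $\sqrt p$ giving $\varrho$, and a mean-zero cross term), with the cross term controlled through its variance rather than a norm bound, exactly as in the paper's treatment of the last term of its expansion. The only cosmetic difference is that you compute the cross term's variance in one line via $\mathrm{Var}(d)=n^{-1}H$, where the paper first peels off sample-versus-population remainders before bounding the variance of the leading piece; both yield the same $O_p(p^{-1/4})$ conclusion.
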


	\section{Extension to a GMM setup with quadratic moment conditions}\label{sec:quadmoms}
	
	In this section, we show how our methodology can be extended to a GMM setup which employs quadratic moment conditions. One of the advantages of the latter is that it permits identification of the model even when $\beta=0$, and accordingly in this section we permit this. On the other hand, including quadratic moments in the increasing dimension vector of moment conditions complicates matters considerably in establishing the limiting distribution of the test statistic, cf. Lemma A.9 of \cite{Yang2025}. This result, the only one available in the literature for this problem to the best of our knowledge, relies on homoskedasticity and higher order moments. Because of the restrictive nature of these conditions, we provide only the null distribution of the test statistic. In the next section, we apply this in a homoskedastic Monte Carlo design for $\beta=0$, the pure SAR case, and find satisfactory size and power performance.
	
	\begin{assumption}\label{ass:quadmoms}
		The $\epsilon_i$ are i.i.d. with zero mean, variance $\sigma^2\in[k,K]$ and $\mathbb{E}\left(\epsilon_i^{12}\right)<K$.	
	\end{assumption}	
	
		\noindent 	Let $L_{l}, l=1,\ldots,\ell,$ denote a set of $n\times n$ matrices, denote by  $\epsilon_U(\gamma)$  be the vector of the $\epsilon_{Ui}$ and define the $\ell+m$-dimensional vector of GMM moment conditions $q(\gamma) = n^{-1}(L_{1} \epsilon_U(\gamma), \dots, L_{\ell} \epsilon_U(\gamma),Z)' \epsilon_U(\gamma)$ and its $(\ell+m)\times (p+k+1)$ derivative matrix $Q(\gamma)=\frac{\partial q(\gamma)}{\partial \gamma'} = -n^{-1} (L_{1}^{\prime} \epsilon_U(\gamma), \dots, L_{\ell}^{\prime} \epsilon_U(\gamma),Z)' U$, where $\ell,m\rightarrow\infty$, $\ell+m\geq p+k+1$ and $\ell+m\sim p$. For any square $n \times n$ matrix $A$, let $vec_{D}(A) =
	(a_{11},\ldots,a_{nn})^{\prime}$ denote the column vector formed with the diagonal elements of $A$, $A^s=A+A'$, and let $\mu_3 = \mathbb{E}(\epsilon_{i}^3)$ and
	$\mu_4 = \mathbb{E}(\epsilon_{i}^4)$. It follows that, conditional on $Z$ and the $L_i$, $i=1,\ldots,\ell$, $var(\sqrt{n}q(\gamma_{0})) = \hat\Psi$ where
	
	\begin{equation}\label{Psidef}
		\hat\Psi =n^{-1} \left( \begin{array}{cc} (\mu_{4} - 3\sigma^{4})\omega_{\ell}^{\prime}\omega_{\ell} & \mu_{3}\omega_{\ell}^{\prime}Z \\ \mu_{3}Z^{\prime}\omega_{\ell} & 0 \end{array} \right) +n^{-1} V
	\end{equation}
	with $\omega_{\ell} = [vec_D(L_{1}),\ldots,vec_D(L_{\ell})]$ and
	\begin{equation}
		V = \sigma^{4} \left( \begin{array}{cccc} \mathrm{tr}\left(L_{1}L_{1}^{s}\right) & \cdots & \mathrm{tr}\left(L_{1}L_{\ell}^{s}\right) & 0 \\ \vdots & & \vdots & \vdots \\ \mathrm{tr}\left(L_{\ell}L_{1}^{s}\right) & \cdots & \mathrm{tr}\left(L_{\ell}L_{\ell}^{s}\right) & 0 \\ 0 & \cdots & 0 & \frac{1}{\sigma^{4}}Z^{\prime}Z \end{array} \right) = \sigma^{4} \left( \begin{array}{cc} A_{\ell} & 0 \\ 0 & \frac{1}{\sigma^{4}}Z^{\prime}Z \end{array} \right),
	\end{equation}
	where $A_{\ell} = [vec(L_{1}^{\prime}),\ldots,vec(L_{\ell}^{\prime})]^{\prime}[vec(L_{1}^s),\ldots,vec(L_{\ell}^s)]$, by using $\mathrm{tr}(AB) = vec(A^{\prime})^{\prime}$
	$vec(B)$ for any conformable matrices $A$ and $B$, and we use $\gamma_0$ to denote the true value of the parameter $\gamma$.
	
	An optimal GMM estimator employs weighting matrix $\hat\Psi^{-1}$, which is function of the moments $\sigma^2$, $\mu_3$ and $\mu_4$. Denote an initial GMM estimator under the null by $\check{\gamma} = (0_p',\check{\theta}^{\prime} )'$, obtained for example via 2SLS or GMM with identity weighting matrix. Denote $\check{\epsilon}_{i} =  y_{i} - \check{\lambda}\sum_{j=1}^{n}w_{ij}y_{j} -  x_{i}^{\prime}\check{\beta}$. Then $\sigma^2$, $\mu_3$ and $\mu_4$ can be estimated by their sample counterparts constructed using $\check{\epsilon}_{i}$. Plugging these estimates into (\ref{Psidef}), we obtain a consistent (in norm) estimate $\check{\Psi}$ for $\Psi={\mathbb{E}}(\hat{\Psi})$.

	\begin{assumption}\label{ass:Psicheck}
		$\|\check{\Psi} - \Psi\| = O_p\left(p/{\sqrt{n}}+p^{1/2-\nu}\right)$ 
	\end{assumption}

	\begin{assumption}\label{ass:gammacons}
For any $\eta>0$,	$\liminf_{n\rightarrow\infty}\inf_{\theta\in \{\theta^*:\left\Vert\theta^*-\theta_0\right\Vert\geq \eta\}}(\mathbb{E}q(0_p,\theta ))' {\Psi}^{-1} \mathbb{E}(q(0_p,\theta ))>0$, where $\theta_0$ is the true parameter vector that lies in the interior of a compact parameter space $\Theta\subset\Lambda\times \Re^k$.
	\end{assumption}
	\noindent 		Define the $(\ell+m)\times (p+k+1)$ matrix
	\begin{equation}
		D
		= n^{-1}E\begin{pmatrix}
			(\Upsilon(y), Wy)'L_1^s\epsilon & \cdots & (\Upsilon(y), Wy)'L_\ell^s\epsilon & (\Upsilon(y), Wy)' Z \\
			0 & \cdots & 0 & X' Z
		\end{pmatrix}',
	\end{equation}
	where $\Upsilon(y)=(\Upsilon_1(y),\ldots,\Upsilon_p(y))$.
	\begin{assumption}\label{ass:Psieigs} The matrix $\Psi$, with $\ell+m\geq k+p+1$ , satisfies 
		\begin{equation}\label{ass_psi_eig}
			\limsup_{n\rightarrow\infty}\eigbig(\Psi) <\infty ,\;\;\; \liminf_{n\rightarrow\infty}\eigsmall\left(\Psi \right) >0,
		\end{equation}
		and the matrix $D$ satisfies 
		\begin{equation}\label{ass_D_eig}
			\limsup_{n\rightarrow\infty}\eigbig(D'D) <\infty ,\;\;\; \liminf_{n\rightarrow\infty}\eigsmall(D'D ) >0.
		\end{equation}
	\end{assumption}
		\noindent 		Assumption \ref{ass:Psicheck} is equation (45) in the supplement to \cite{Yang2025}, for which sufficient conditions are provided in Remark 3.2 of \cite{Yang2025}. The restricted OGMM estimator $\mathring{\gamma} = (0_p',\mathring{\theta}' )'$ is obtained from the minimization of $q'(0_p,\theta ) \check{\Psi}^{-1} q(0_p,\theta )$. The LM test statistic based on the GMM objective is then defined as
	\begin{equation}
		\mathcal{S} =\frac{n q^{\prime}\!\left(\mathring{\gamma}\right)\check{\Psi}^{-1} Q\!\left(\mathring{\gamma}\right)\!\left[Q^{\prime}\!\left(\mathring{\gamma}\right)\check{\Psi}^{-1} Q\!\left(\mathring{\gamma}\right)\right]^{-1} Q^{\prime}\!\left(\mathring{\gamma}\right)\check{\Psi}^{-1} q\!\left(\mathring{\gamma}\right)-p}{\sqrt{2p}}.
	\end{equation}
	
	\begin{theorem}\label{thm:quad_moms}
		Let Assumptions \ref{ass:regressors}, \ref{ass:Wmore}, \ref{ass:Smatrix}, \ref{ass:eigsandinsts}, \ref{ass:quadmoms}, \ref{ass:Psicheck}, \ref{ass:gammacons} and  \ref{ass:Psieigs} hold with $\nu>5/2$, and $p^3/n\rightarrow 0$ as $n\rightarrow\infty$. Then $\mathcal{S}\overset{d}{\rightarrow}N(0,1)$ under $\mathcal{H}_{0}$.
	\end{theorem}

	\section{Monte Carlo}\label{sec:MC}
\subsection{Linear moments test}
In this section we report the results of a simulation exercise with $k=3$, $\lambda_0=0.4$, $\beta_0=(0.5, -2,1)^\prime$.  The $n \times 3$ matrix $X$ has ones in its first column, while elements in the second and third columns are generated in each replication as i.i.d. random variables   from $U[-2, 2]$ and $U[-2.5,2.5]$, respectively. We generate $\epsilon_i$, for $%
i=1,\ldots,n$, as
\begin{equation}
	\epsilon_{i}=\sigma_{i} \zeta_{i},  \label{epsilon}
\end{equation}
with $\zeta_{i}$ generated either from the standard normal distribution or the t-distribution with 5 degrees of freedom ($t_5$), and employ three mechanisms for the scale parameter $\sigma_i$:  
\begin{itemize}
	\item[a)] Direct construction using the formula 
	\begin{equation}
		\sigma_{i}= \frac{d_{i}}{\sum_{j=1}^{n}{d_{j}}/n},  \label{heteroKP}
	\end{equation}
	where $d_{i}$ denotes the number of
	neighbors of unit $i$, such that, for each generic $W$, $d_i = \text{card}(j:\  w_{ij} \neq 0, i\neq j)$. 
	\item[b)] The $\sigma^2_{i}$ are randomly generated values from a $\chi_2^2$ distribution. 
	\item[c)]  Conditional heteroskedasticity of the form
	$\sigma_{i}= {\sqrt{x_{i2}^2+x_{i3}^2}}/{2}$.
\end{itemize}
For a) the $\sigma_i$ are kept fixed across simulations since $W$ are also kept fixed across iterations as explained below. In  b),  the $\sigma_i$ are kept fixed across  simulations and also across different weight matrix scenarios, while for   c), they  vary across  iterations according to the values of the generated $x_{i2}$ and $x_{i3}$.   The heteroskedasticity design in (%
\ref{heteroKP}) is in line with the simulation work in \cite{kelejian2010specification} and \cite{Arraiz2010}, and is motivated by situations in which
heteroskedasticity arises as units across different regions may have
different numbers of neighbours.  In contrast, the design b) yields error heteroskedasticity that is independent of the spatial dependence implied by the weight matrix while the design c) models conditional heteroskedasticity.

We construct the matrix $Z$ as $n\times (p+k+2)$ matrix with $i$-th row
\[
\left(1,x_{i2}, x_{i3},w_i'x_2,w_i'x_3,\psi_1(w_i'x_{\ell_1}),\psi_2(w_i'x_{\ell_2}),\cdots, \psi_p(w_i'x_{\ell_p})\right)'
\]
with $\ell_1, \ell_4, \ell_5, \ell_8, \ell_9, \ell_{12}=2$ and $\ell_2, \ell_3, \ell_6, \ell_7, \ell_{10},\ell_{11} =3$.   	  We alternate $x_{i2}$ and $x_{i3}$ inside  $\psi_j(\cdot)$ in this way so as to ensure both variables appear in odd and even degree polynomials.  We employ the Hermite polynomials for the basis functions $\psi_j(\cdot)$, starting from the second polynomial.   

We report below  empirical sizes and powers for $n=100, 200, 400, 700, 1000, 2000$ based on $1000$ Monte Carlo replications.  	We set $p$ as the integer part of $n^{1/3}$, leading to $p=4,5,7,8,10, 12$ for  $n=100, 200, 400, 700, 1000, 2000$, respectively, and use a nominal size set at $\alpha=0.05$.
Because our test statistic is a standardized chi-squared type statistic, for small $p$, critical  values based on the standardized $\chi_p^2$ distribution may provide better finite sample approximation than those based on the asymptotic standard normal approximation. Hence we use two critical values, one based on $\chi_p^2$ and defined as $(\chi^2_{p, 1-\alpha}-p)/\sqrt{2p}$ where 
$\chi^2_{p, 1-\alpha}$ is the $1-\alpha$-th percentile of the $\chi_p^2$ distribution, and the other as the $1-\alpha$-th percentile of the standard normal distribution.  For $p=4,5, 7, 8, 10, 12$, the $\chi_p^2$-based critical values are 1.9403,  1.9195, 1.8887, 1.8767, 1.8575 and 1.8424, respectively, significantly larger  than the critical value based on normal distribution, 1.645.  Hence we expect tests based on asymptotic normality to be oversized for small $n$, which is indeed what we observe below.

\subsubsection{Size}
We generate $y$ according to (\ref{model}) under $\mathcal{H}_{0}$ using   
four  different configurations for $W$, motivated by a range of empirical situations.

\begin{itemize}
	\item[1)] Exponential distance:  $w_{ij}=exp(-|\ell _{i}-\ell
	_{j}|)1(|\ell _{i}-\ell _{j}|<\log n)$ where $\ell _{i}$ is location of $i$
	along the interval $[0,n]$ which is generated from i.i.d. $ U[0,n]$.

	\item[2)] Cutoff: $W^*$ is generated as $w_{ij}^*= \Phi(-d_{ij})I(c_{ij}<n^{-2/3})$ if $i   \neq j$, and $w_{ii}^*=0$, where
	$\Phi(  \cdot)$ is the standard normal cdf, $d_{ij}  \sim$ i.i.d. $U[-3,3]$ and $c_{ij} \sim$ i.i.d. $U[0,1]$ and  we set $W=W^*/1.1\|W^*\|$.

	\item[3)] Circulant:  $W$ is a matrix with  $W_{i,i-1}=W_{i,i+1}=0.5, i=1, \ldots, n$.
	
	\item[4)] Random:  $W$ is randomly generated as an $n\times n$ symmetric matrix of zeros and
	ones, where the number of `ones' is set to be the integer part of $2n^{6/5}$.   This matrix  emulates a contiguity matrix.  The average number of non-zero elements in a row ranges 5-9.2 for our choices of $n$.

\end{itemize} 
All $W$ are normalized by their spectral norms, and for  specifications 1), 2) and 4), $W$ are stochastically generated and then fixed across replications.

We present Monte Carlo size results for nominal size of 5$\%$ for  Gaussian and $t_5$ errors in Tables \ref{table:sizeheta} and \ref{table:sizehetb}, respectively. Tests based on the standardized $\chi_p^2$ distribution  lead to good empirical sizes even for small $n$ across all weight matrix designs and heteroskedasticity specifications, while relying on asymptotic normality leads to some oversizing,  as  anticipated.  The extent of oversizing reduces with larger $n$ and $p$, as expected,  but the oversizing still remains at  $n=2000$.  Hence, the standardized $\chi_p^2$-based critical value can provide a useful robustness check to the normal approximation in practice.

\begin{table}
	
	\caption{Monte Carlo size of test of $\mathcal{H}_{0A}$, Gaussian error with heteroskedasticity a), b) and c), using critical values based on $\chi_p^2$ and standard normal distributions}
	\vspace{5pt}
	\centering
	\begin{tabular}{cc cccc  cccc}
		\toprule
		&
		&\multicolumn{4}{c}{{$\chi_p^2$-based}}
		&\multicolumn{4}{c}{{${N}(0,1)$-based}} \\
		
		$\sigma_i$ & $n,p$
		& {expo.}
		& {cutoff}
		& {circ.}
		& {rand.}
		& {expo.}
		& {cutoff}
		& {circ.}
		& {rand.} \\
		
		\cmidrule(rr){1-2}
		\cmidrule(llll){3-6}
		\cmidrule(llll){7-10}
		
		a) & 100, 4 & 0.044 & 0.052 & 0.052 & 0.051 & 0.070 & 0.100 & 0.102 & 0.078 \\
		& 200, 5 & 0.052 & 0.051 & 0.053 & 0.046 & 0.078 & 0.064 & 0.083 & 0.065 \\
		& 400, 7 & 0.051 & 0.050 & 0.048 & 0.050 & 0.083 & 0.063 & 0.068 & 0.065 \\
		& 700, 8 & 0.045 & 0.050 & 0.048 & 0.050 & 0.060 & 0.070 & 0.069 & 0.062 \\
		& 1000, 10 & 0.048 & 0.046 & 0.051 & 0.047 & 0.072 & 0.061 & 0.068 & 0.057 \\
		& 2000, 12 & 0.050 & 0.053 & 0.049 & 0.052 & 0.061 & 0.063 & 0.061 & 0.067 \\
		
		\cmidrule(rr){1-2}
		\cmidrule(llll){3-6}
		\cmidrule(llll){7-10}
		
		b) & 100, 4 & 0.051 & 0.052 & 0.060 & 0.048 & 0.081 & 0.081 & 0.090 & 0.074 \\
		& 200, 5 & 0.050 & 0.045 & 0.053 & 0.049 & 0.072 & 0.071 & 0.078 & 0.073 \\
		& 400, 7 & 0.048 & 0.049 & 0.051 & 0.050 & 0.063 & 0.073 & 0.065 & 0.073 \\
		& 700, 8 & 0.051 & 0.052 & 0.048 & 0.052 & 0.063 & 0.068 & 0.067 & 0.065 \\
		& 1000, 10 & 0.048 & 0.051 & 0.049 & 0.048 & 0.069 & 0.061 & 0.069 & 0.067 \\
		& 2000, 12 & 0.047 & 0.045 & 0.050 & 0.047 & 0.068 & 0.062 & 0.060 & 0.058 \\
		
		\cmidrule(rr){1-2}
		\cmidrule(llll){3-6}
		\cmidrule(llll){7-10}
		
		c) & 100, 4 & 0.066 & 0.058 & 0.075 & 0.062 & 0.091 & 0.088 & 0.110 & 0.087 \\
		& 200, 5 & 0.045 & 0.039 & 0.055 & 0.045 & 0.071 & 0.063 & 0.075 & 0.074 \\
		& 400, 7 & 0.044 & 0.047 & 0.055 & 0.055 & 0.069 & 0.062 & 0.074 & 0.068 \\
		& 700, 8 & 0.044 & 0.046 & 0.042 & 0.045 & 0.063 & 0.055 & 0.068 & 0.057 \\
		& 1000, 10 & 0.048 & 0.042 & 0.057 & 0.045 & 0.063 & 0.058 & 0.074 & 0.062 \\
		& 2000, 12 & 0.047 & 0.056 & 0.050 & 0.051 & 0.064 & 0.067 & 0.069 & 0.070 \\
		
		\bottomrule
	\end{tabular}

	\vspace{10pt}
	\raggedright{\footnotesize Nominal size: $\alpha=0.05$. Columns 3--6 are results from using $\chi_p^2$-based critical values while columns 7--10 are based on using standard normal-based critical values.}
	\label{table:sizeheta}
\end{table}

\begin{table}
	
	\caption{Monte Carlo size of test of $\mathcal{H}_{0A}$, $t_5$ error with heteroskedasticity a), b) and c), using critical values based on $\chi_p^2$ and standard normal distributions}
	\vspace{5pt}
	\centering
	
	\begin{tabular}{cc cccc  cccc}
		\toprule
		&
		&\multicolumn{4}{c}{{$\chi_p^2$-based}}
		&\multicolumn{4}{c}{{${N}(0,1)$-based}} \\
		
		$\sigma_i$ & $n,p$
		& {expo.}
		& {cutoff}
		& {circ.}
		& {rand.}
		& {expo.}
		& {cutoff}
		& {circ.}
		& {rand.} \\
		
		\cmidrule(rr){1-2}
		\cmidrule(llll){3-6}
		\cmidrule(llll){7-10}
		
		a) & 100, 4 & 0.064 & 0.069 & 0.064 & 0.068 & 0.097 & 0.106 & 0.097 & 0.100 \\
		& 200, 5 & 0.049 & 0.055 & 0.046 & 0.051 & 0.068 & 0.085 & 0.071 & 0.082 \\
		& 400, 7 & 0.052 & 0.049 & 0.051 & 0.049 & 0.070 & 0.079 & 0.076 & 0.076 \\
		& 700, 8 & 0.050 & 0.049 & 0.047 & 0.048 & 0.065 & 0.052 & 0.069 & 0.065 \\
		& 1000, 10 & 0.046 & 0.048 & 0.050 & 0.045 & 0.059 & 0.068 & 0.072 & 0.060 \\
		& 2000, 12 & 0.047 & 0.047 & 0.051 & 0.050 & 0.063 & 0.056 & 0.064 & 0.070 \\
		
		\cmidrule(rr){1-2}
		\cmidrule(llll){3-6}
		\cmidrule(llll){7-10}
		
		b) & 100, 4 & 0.070 & 0.065 & 0.074 & 0.057 & 0.097 & 0.083 & 0.109 & 0.083 \\
		& 200, 5 & 0.052 & 0.051 & 0.054 & 0.048 & 0.082 & 0.077 & 0.085 & 0.075 \\
		& 400, 7 & 0.049 & 0.052 & 0.048 & 0.051 & 0.075 & 0.074 & 0.070 & 0.071 \\
		& 700, 8 & 0.048 & 0.047 & 0.055 & 0.051 & 0.075 & 0.074 & 0.082 & 0.068 \\
		& 1000, 10 & 0.045 & 0.050 & 0.049 & 0.050 & 0.068 & 0.064 & 0.066 & 0.068 \\
		& 2000, 12 & 0.049 & 0.052 & 0.052 & 0.047 & 0.075 & 0.067 & 0.064 & 0.058 \\
		
		\cmidrule(rr){1-2}
		\cmidrule(llll){3-6}
		\cmidrule(llll){7-10}
		
		c) & 100, 4 & 0.067 & 0.059 & 0.079 & 0.063 & 0.113 & 0.088 & 0.117 & 0.091 \\
		& 200, 5 & 0.063 & 0.054 & 0.067 & 0.063 & 0.093 & 0.075 & 0.089 & 0.086 \\
		& 400, 7 & 0.045 & 0.040 & 0.059 & 0.046 & 0.062 & 0.056 & 0.081 & 0.072 \\
		& 700, 8 & 0.043 & 0.046 & 0.057 & 0.045 & 0.064 & 0.068 & 0.077 & 0.061 \\
		& 1000, 10 & 0.047 & 0.044 & 0.048 & 0.049 & 0.065 & 0.066 & 0.062 & 0.065 \\
		& 2000, 12 & 0.045 & 0.044 & 0.052 & 0.046 & 0.055 & 0.053 & 0.071 & 0.061 \\
		
		\bottomrule
	\end{tabular}

	\vspace{10pt}

	\raggedright{\footnotesize Nominal size: $\alpha=0.05$. Columns 3--6 are results from using $\chi_p^2$-based critical values while columns 7--10 are based on using standard normal-based critical values.}
	
	\label{table:sizehetb}
\end{table}

\subsubsection{Power}

Power analysis requires data generation from a nonlinear model, which in general requires the solving of $n$ nonlinear equations at every iteration, a futile task. We resolve this issue by generating the following two nonlinear SAR models on a lattice, similar to \cite{Jenish2016}.  Letting  $k$ and $j$ denote indices across $\Re^2$ as before and setting $y_{0, j}=0,  j=1,2,\ldots, m_2$ and  $y_{k,0}=0,  k=1,2,\ldots, m_1$:
$$
y_{k,j} = arctan (y_{k-1,j}+y_{k,j-1})+x_{k,j}' \beta +\epsilon_{k,j}, \quad k=1, \ldots, m_1, \quad j=1, \ldots, m_2,
$$
and 
$$
y_{k,j} = log(1+0.25(y_{k-1,j}+y_{k,j-1})^2)+x_{k,j}' \beta +\epsilon_{k,j}, \quad k=1, \ldots, m_1, \quad j=1, \ldots, m_2.
$$

To obtain $n=m_1m_2$ with magnitudes in line with the choices of $n$ of the previous subsection, we use
$(m_1, m_2)=(10,10),  (14,15), (20,20), (26, 27), (31,32), (44, 45)$  leading to 
$n=100, 210, 400, 702, 992, 1980$, respectively. 
We generate the corresponding $W$  with $i=m_2(k-1)+j$, so  that  
$$W(m_2(k-1)+j, m_2(k-2)+j)=W(m_2(k-1)+j, m_2(k-1)+j-1)=1$$ are the nonzero elements of the weight matrix. 
This weight matrix  correctly picks out the relevant neighbours but the misspecification lies in the linearity of SAR model being fitted.  In line with the previous section, we report power results for $p=4, 5, 7, 8, 10, 12$ for $n=100, 210, 400, 702, 992, 1980$ respectively.

Tables  \ref{table:powerheta} and \ref{table:powerhetb} present empirical  power of the test of $\mathcal{H}_{0A}$, employing Gaussian and $t_5$  errors, respectively.  Power is best in general when the error has  conditional heteroskedasticity c), followed by  a) then  b).  This is  expected given the random nature of error heteroskedasticity in b), that  is independent of both the weight matrix and regressors.   The empirical power improves with larger $n$ for both settings of nonlinear SAR, although the improvement is somewhat slow for  $t_5$ error combined with heteroskedasticity of type b).

\begin{table}
	\caption{Monte Carlo power of   test of $\mathcal{H}_{0A}$, Gaussian error with heteroskedasticity a), b) and c), using critical values based on  $\chi_p^2$ and standard normal distributions}
	\centering
	\vspace{5pt}
	\begin{tabular}{cc lclc}
		\toprule
		
		&		&\multicolumn{2}{c}{\footnotesize{$\chi_p^2$-based}  }&\multicolumn{2}{c}{\footnotesize{${N}(0,1)$-based} }\\

		$\sigma_i$		&	$n, p$ & \footnotesize{$log$}  &\footnotesize{$arctan$} & \footnotesize{$log$} &\footnotesize{$arctan$}\\

		\cmidrule(r){1-2}
		\cmidrule(lr){3-4}
		\cmidrule(lr){5-6}
		a)		& 100, 4 & 0.209 & 0.149 & 0.263 & 0.204 \\ 
		& 210, 5 & 0.324 & 0.270 & 0.379 & 0.335 \\ 
		& 400, 7 & 0.525 & 0.426 & 0.594 & 0.494 \\ 
		& 702, 8 & 0.729 & 0.685 & 0.767 & 0.734 \\ 
		& 992, 10 & 0.849 & 0.831 & 0.877 & 0.858 \\ 
		& 1980, 12 & 0.992 & 0.991 & 0.997 & 0.993  \\
		\cmidrule(r){1-2}
		\cmidrule(lr){3-4}
		\cmidrule(lr){5-6}
		b)		&  100, 4 & 0.170 & 0.120 & 0.214 & 0.156 \\ 
		& 210, 5 & 0.202 & 0.148 & 0.243 & 0.202 \\ 
		& 400, 7 & 0.315 & 0.214 & 0.373 & 0.270 \\ 
		& 702, 8 & 0.435 & 0.327 & 0.483 & 0.394 \\ 
		& 992, 10 & 0.598 & 0.457 & 0.647 & 0.518 \\ 
		& 1980, 12 & 0.861 & 0.786 & 0.883 & 0.816 \\
		\cmidrule(r){1-2}
		\cmidrule(lr){3-4}
		\cmidrule(lr){5-6}

		c)      &  100, 4 & 0.307 & 0.223 & 0.349 & 0.291 \\ 
		& 210,5 & 0.439 & 0.36 & 0.488 & 0.41 \\
		& 400, 7 & 0.585 & 0.562 & 0.626 & 0.612 \\ 
		& 702, 8 & 0.827 & 0.817 & 0.855 & 0.845 \\ 
		& 992, 10 & 0.924 & 0.914 & 0.935 & 0.929 \\ 
		& 1980, 12 & 1.000 & 0.997 & 1.000 & 0.997 \\ \bottomrule
		
	\end{tabular}%
	
	\raggedright
	\vspace{10pt}
	{\footnotesize Data  generated as nonlinear SAR.  Columns 3-4 are results from using $\chi_p^2$-based critical values while columns 5-6 are those from using standard normal-based critical values.   }
	\label{table:powerheta}
\end{table}

\begin{table}
	\caption{Monte Carlo power of   test of $\mathcal{H}_{0A}$, $t_5$ error with heteroskedasticity a), b) and c), using critical values based on  $\chi_p^2$ and standard normal distributions}
	\centering
	\vspace{5pt}
	\begin{tabular}{cc lclc}
		\toprule
		
		&		&\multicolumn{2}{c}{\footnotesize{$\chi_p^2$-based}  }&\multicolumn{2}{c}{\footnotesize{${N}(0,1)$-based} }\\

		$\sigma_i$		&	$n, p$ & \footnotesize{$log$}  &\footnotesize{$arctan$} & \footnotesize{$log$} &\footnotesize{$arctan$}\\

		\cmidrule(r){1-2}
		\cmidrule(lr){3-4}
		\cmidrule(lr){5-6}
		a)	 & 100, 4  & 0.177 & 0.127 & 0.238 & 0.168 \\
		& 210, 5  & 0.249 & 0.179 & 0.305 & 0.232 \\
		& 400, 7  & 0.348 & 0.297 & 0.395 & 0.35  \\
		& 702, 8  & 0.537 & 0.454 & 0.584 & 0.507 \\
		& 992, 10 & 0.666 & 0.598 & 0.704 & 0.635 \\
		& 1980, 12 & 0.940  & 0.889 & 0.952 & 0.907 \\
		
		\cmidrule(r){1-2}
		\cmidrule(lr){3-4}
		\cmidrule(lr){5-6}
		b)  & 100, 4  & 0.137 & 0.097 & 0.19  & 0.129 \\
		& 210, 5  & 0.162 & 0.12  & 0.21  & 0.156 \\
		& 400, 7  & 0.219 & 0.161 & 0.27  & 0.215 \\
		& 702, 8  & 0.287 & 0.211 & 0.337 & 0.253 \\
		& 992, 10 & 0.379 & 0.258 & 0.423 & 0.302 \\
		& 1980, 12 & 0.642 & 0.470  & 0.690  & 0.517\\
		\cmidrule(r){1-2}
		\cmidrule(lr){3-4}
		\cmidrule(lr){5-6}			
		c)   &   100, 4 & 0.240 & 0.185 & 0.293 & 0.247 \\ 
		& 210,5 & 0.313 & 0.265 & 0.372 & 0.317 \\ 
		& 400, 7 & 0.453 & 0.396 & 0.515 & 0.474 \\ 
		& 702, 8 & 0.679 & 0.605 & 0.728 & 0.654 \\ 
		& 992, 10 & 0.786 & 0.763 & 0.822 & 0.789 \\ 
		& 1980, 12 & 0.981 & 0.956 & 0.985 & 0.968 \\ \bottomrule
	\end{tabular}%
	
	\raggedright
	\vspace{10pt}
	{\footnotesize Data  generated as nonlinear SAR.   Columns 3-4 are results from using $\chi_p^2$-based critical values while columns 5-6 are those from using standard normal-based critical values.   }
	\label{table:powerhetb}
\end{table}

\subsection{Quadratic moments test when $\beta=0$}

In this subsection, we report the empirical size and power of the quadratic moments test using the statistic $\mathcal{S}$ of Theorem \ref{thm:quad_moms}. We are interested in whether a test based on $\mathcal{S}$ performs well in cases not covered by testing with $\mathcal{T}$, and therefore focus on the pure SAR model with $\beta_0 = 0$.

There are two modifications to the simulation setup to satisfy the assumptions of Theorem \ref{thm:quad_moms}. First, we generate the error term $\zeta_i$ in (\ref{epsilon}) from a $t_{14}$ distribution (as well as a standard normal distribution, as before), rather than a $t_5$ distribution, due to the requirement of at least 12th order moments. Second, the error terms are homoskedastic, with $\sigma_i = 1$ in (\ref{epsilon}) for $i = 1, \ldots, n$. We use only quadratic moments involving the matrices $L_1, \ldots, L_{\ell}$, where $\ell$ is set equal to the integer part of $n^{1/3}$ and $L_i=W^i-n^{-1}\text{tr}(W^i)I_n$. 

Tables \ref{table:pureSARsize1} and \ref{table:pureSARsize2} report the size of the test of $\mathcal{H}_{0A}$ under the four weight-matrix specifications considered previously.  The empirical size reported shows  undersizing across the weight matrix specifications that  improves somewhat slowly with increasing $n$ and $p$. This is not suprising given that estimates of $\lambda$ in a pure SAR models often converge slowly, see e.g. \cite{Lee2001} for GMM, which we use, and \cite{lee2004asymptotic} for QMLE. For $n=2000, p=12$, sizing is satisfactory for the test using standard normal critical values for all four weight matrices, but some moderate undersizing remains for the test based on $\chi_p^2$ based critical values.  The patterns are similar between the two tables.

Table \ref{table:pureSARpower} reports empirical power of the test using $\mathcal{S}$  when the data is generated from nonlinear SAR as explained in the previous sub-section.  The empirical power increases with $n$ and $p$, however at a slower rate, especially for the arctan case, so we report results also for larger sample sizes $n=3025, 5041$.\footnote{$m_1, m_2= 55, 71$ respectively.}  The empirical power does increase, albeit slowly, with larger $n,p$ and what is notable is that for the log case, the reported power is identical whether one uses critical values from $\chi_p^2$ or standard normal distributions for larger $n$.

\begin{table}[htbp]
	\centering
	\caption{Monte Carlo size of test of $\mathcal{H}_{0A}$ using $\mathcal{S}$ of Theorem \ref{thm:quad_moms} under pure SAR, Gaussian error, using critical values based on $\chi_p^2$ and standard normal distributions}\label{table:pureSARsize1}
	\label{tab:size_designs}
	\vspace{5pt}
	
	\begin{small}
		\begin{tabular}{c cccc cccc}
			\toprule
			&
			\multicolumn{4}{c}{$\chi_p^2$-based}
			&
			\multicolumn{4}{c}{$N(0,1)$-based}
			\\
			
			\cmidrule(lr){2-5}
			\cmidrule(lr){6-9}
			
			$n,p$
			& expo.
			& cutoff
			& circ.
			& rand.
			& expo.
			& cutoff
			& circ.
			& rand.
			\\
			\midrule
			
			100, 4   & 0.010 & 0.014 & 0.026 & 0.017 & 0.015 & 0.018 & 0.035 & 0.023 \\
			200, 5   & 0.019 & 0.022 & 0.027 & 0.021 & 0.028 & 0.030 & 0.039 & 0.026 \\
			400, 7   & 0.019 & 0.036 & 0.024 & 0.053 & 0.027 & 0.045 & 0.034 & 0.062 \\
			700, 8   & 0.015 & 0.038 & 0.035 & 0.043 & 0.023 & 0.048 & 0.045 & 0.048 \\
			1000, 10 & 0.020 & 0.032 & 0.034 & 0.035 & 0.023 & 0.039 & 0.041 & 0.044 \\
			2000, 12 & 0.038 & 0.052 & 0.036 & 0.034 & 0.044 & 0.056 & 0.052 & 0.040 \\
			
			\bottomrule
		\end{tabular}
		
		\vspace{8pt}
		\raggedright
		\footnotesize
		Nominal size: $\alpha=0.05$. Columns 3--6 are results from using $\chi_p^2$-based critical values while columns 7--10 are based on using standard normal-based critical values.  Based on quadratic moments.
	\end{small}
	
\end{table}

\begin{table}[htbp]
	\centering
	\caption{Monte Carlo size of test of $\mathcal{H}_{0A}$ using $\mathcal{S}$ of Theorem \ref{thm:quad_moms} under pure SAR, $t_{14}$ error, using critical values based on $\chi_p^2$ and standard normal distributions}\label{table:pureSARsize2}
	\label{tab:power_designs}
	\vspace{5pt}
	\begin{small}
		\begin{tabular}{c cccc cccc}
			\toprule
			&
			\multicolumn{4}{c}{$\chi_p^2$-based}
			&
			\multicolumn{4}{c}{$N(0,1)$-based}
			\\
			
			\cmidrule(lr){2-5}
			\cmidrule(lr){6-9}
			
			$n,p$
			& expo.
			& cutoff
			& circ.
			& rand.
			& expo.
			& cutoff
			& circ.
			& rand.
			\\
			\midrule
			
			100, 4   & 0.011 & 0.016 & 0.028 & 0.018 & 0.015 & 0.020 & 0.042 & 0.027 \\
			200, 5   & 0.011 & 0.040 & 0.032 & 0.024 & 0.016 & 0.046 & 0.040 & 0.027 \\
			400, 7   & 0.016 & 0.026 & 0.025 & 0.035 & 0.021 & 0.036 & 0.035 & 0.040 \\
			700, 8   & 0.031 & 0.038 & 0.044 & 0.022 & 0.037 & 0.045 & 0.056 & 0.030 \\
			1000, 10 & 0.027 & 0.041 & 0.033 & 0.038 & 0.031 & 0.048 & 0.041 & 0.044 \\
			2000, 12 & 0.050 & 0.044 & 0.034 & 0.035 & 0.057 & 0.047 & 0.048 & 0.049 \\
			
			\bottomrule
		\end{tabular}
		
		\vspace{8pt}
		\raggedright
		\footnotesize Nominal size: $\alpha=0.05$. Columns 3--6 are results from using $\chi_p^2$-based critical values while columns 7--10 are based on using standard normal-based critical values.  Based on quadratic moments.
	\end{small}
\end{table}

\begin{table}[htbp]
	\centering
	\caption{Monte Carlo power of   test of $\mathcal{H}_{0A}$ using $\mathcal{S}$ of Theorem \ref{thm:quad_moms} under pure SAR, Gaussian and $t_{14}$ error, using critical values based on  $\chi_p^2$ and standard normal distributions}\label{table:pureSARpower}
	\label{tab:power_transform}
	\vspace{5pt}
	\begin{small}
		\begin{tabular}{cc cccc cccc}
			\toprule
			&
			& \multicolumn{4}{c}{Gaussian error}
			& \multicolumn{4}{c}{$t_{14}$ error} \\
			
			\cmidrule(lr){3-6}
			\cmidrule(lr){7-10}
			
			&
			& \multicolumn{2}{c}{log}
			& \multicolumn{2}{c}{arctan}
			& \multicolumn{2}{c}{log}
			& \multicolumn{2}{c}{arctan} \\
			
			\cmidrule(lr){3-4}
			\cmidrule(lr){5-6}
			\cmidrule(lr){7-8}
			\cmidrule(lr){9-10}
			
			$n,p$  
			& 
			& $\chi_p^2$
			& $N(0,1)$
			& $\chi_p^2$
			& $N(0,1)$
			& $\chi_p^2$
			& $N(0,1)$
			& $\chi_p^2$
			& $N(0,1)$ \\
			\midrule
			
			100, 4   & & 0.090 & 0.109 & 0.061 & 0.073 & 0.123 & 0.145 & 0.065 & 0.075 \\
			210, 5   & & 0.264 & 0.303 & 0.097 & 0.117 & 0.277 & 0.313 & 0.089 & 0.101 \\
			400, 7   & & 0.559 & 0.599 & 0.125 & 0.135 & 0.569 & 0.610 & 0.123 & 0.139 \\
			702, 8   & & 0.820 & 0.838 & 0.177 & 0.193 & 0.839 & 0.857 & 0.127 & 0.189 \\
			992, 10 & & 0.887 & 0.890 & 0.195 & 0.218 & 0.903 & 0.909 & 0.184 & 0.206 \\
			1980, 12 & & 0.888 & 0.889 & 0.286 & 0.327 & 0.916 & 0.916 & 0.275 & 0.303 \\
			3025, 14 & & 0.906 & 0.906 & 0.404 & 0.438 & 0.933 & 0.933 & 0.388 & 0.420 \\
			5041, 17 & & 0.915 & 0.915 & 0.635 & 0.659 & 0.948 & 0.949 & 0.573 & 0.604 \\
			
			\bottomrule
		\end{tabular}
		
		\vspace{8pt}
		\raggedright
		\footnotesize Data  generated as nonlinear SAR.    Based on quadratic moments.   Columns labelled $\chi_p^2$ uses critical values based on $\chi_p^2$ distribution while columns labelled $N(0,1)$ use standard normal critical values. 
	\end{small}
\end{table}

	\section{Tax competition between Finnish municipalities}\label{sec:applications}

	Numerous studies have used a linear SAR specification to test for the presence of competitive behaviour in neighbouring governments' tax-setting decisions.  
	While many had found the presence of tax competition based on a spatial IV approach or QML methods (see an extensive list  given in
	\cite{Allers2005}),  \cite{Lyytikaeinen2012}  used a policy-based IV
	to estimate the SAR parameter and found it to be insignificant. 
	In this section we apply our test of linearity to data from  \cite{Lyytikaeinen2012} to try and shed light on this discrepancy.

	We begin with some institutional background. Finland's municipalities  set their own property tax rates
	within limits imposed by the central government. This raises the question of whether neighbouring municipalities compete on tax rates to attract investment. To study this question, \cite{Lyytikaeinen2012} used a linear SAR model with
	fixed effects such that 
	\begin{equation}\label{finnish}
		t_{it}=\lambda \displaystyle\sum_{j=1}^{n}w_{ij}t_{jt}+X_{it}^{\prime }\beta +\mu _{i}+\tau _{t}+\epsilon _{it}  
	\end{equation}%
	where $t_{it}$ denotes either municipality $i$'s general property tax rates or residential building tax rates in year $t$ and $\mu _{i}$ and $\tau _{t}$ are
	municipality and year fixed effects, respectively.  The municipality controls $X_{it}$ include 
	per capita income,  per capita grants,   unemployment rate, 
	percentage of population aged 0-16,  percentage population aged 61-75  and  percentage of population aged 75+.  
	
	\cite{Lyytikaeinen2012} focused on one-year differenced data using the difference between 2000 and 1999 to allow for municipality fixed effects. This choice was to exploit the variation due to a policy of raising  the common statutory lower limit to the property tax rates that was implemented in 2000. Using $\Delta$ to denote a difference between 2000 and 1999, this exogenous policy change was used to construct a suitable instrument and estimate the parameters of   
	\begin{equation}\label{finnish_base}
		\Delta t_{i}=\lambda \displaystyle\sum_{j=1}^{n}w_{ij}\Delta t_{j}+\Delta X_{i}^{\prime }\beta +\gamma_0+\gamma_1 P _{i}+ \gamma_2 M_i + \Delta \epsilon _{i},  \ \ \ \ i=1,\ldots,411,
	\end{equation}%
	where  $P_i$ is a dummy variable indicating whether the 1998 tax rate level for municipality $i$ was below the new lower limit imposed in 2000 and $M_i$ indicates
	the magnitude of the imposed increase for municipality $i$, both included to ensure exogeneity of the instruments.  
	\cite{Lyytikaeinen2012} found the spatial parameter $\lambda $ to be insignificant for both sets of regressions with either general property tax rate or residential building tax rate, and hence concluded that there is an absence of substantial tax competition between municipalities in Finland.  
	
	We apply our test of linearity  in (\ref{finnish_base}) using  $p=4,5,6$.   We estimate the model with the same policy instrument and  row-normalized contiguity weight matrix used in   \cite{Lyytikaeinen2012}.  We   utilize 
	the spatial IV,  constructed by premultiplying $\Delta X_{i}$ by the weight matrix, 
	as our additional instrumental variables required to satisfy Assumption \ref{ass:eigsandinsts},  and report the results in Table \ref{table:lyttest}.   Our tests do not reject the null of linearity for any choice of $p$ for both general property tax rates and residential building tax rates. This indicates that a linear SAR specification is compatible with the differenced data for Finnish municipalities. Using standardized $\chi^2_p$-based critical values evidently does not change our conclusions.

	\begin{table}[H]
		\caption{Linearity test on tax rate data with fixed effects }
		\centering
		\vspace{5pt}
		\begin{tabular}{c cc cc}
			\hline
			&\multicolumn{2}{c}{\footnotesize{General property tax rate}  }&    \multicolumn{2}{c}{\footnotesize{Residential building tax rate}  }  \\ 
			$p$ &$\hat{\lambda}$&  $\mathcal{T}$  &  $\hat{\lambda}$  &  $\mathcal{T}$\\ 
			
			\hline

			4 & $\underset{ (0.9976)}{0.0770}  $  & 0.6005  & $\underset{ (0.4680) }{ 0.0895}$ & -1.3877\\
			5 & $   \underset{(1.0984) }{0.0832} $&  0.3535 & $  \underset{(0.5251)}{0.1024}$ & -0.4511 \\ 
			6 & $   \underset{(1.0886) }{0.0825}$ &    -0.0419 & $ \underset{(0.5398) }{0.1052}$ & -0.2380 \\ 
			\hline
		\end{tabular}%
		
		\raggedright
		\vspace{10pt}
		{ \footnotesize  Using differenced data between 2000 and 1999.  $n=411$.  t-statistics in parenthesis.  * $p-value<0.1$; ** $p-value<0.05$; *** $p-value<0.01$.  }
		\label{table:lyttest}	
	\end{table}
	
	As observed above, the absence of tax competition that \cite{Lyytikaeinen2012} finds differs from earlier findings in the literature. To try and get to the bottom of this, we observe that one notable way in which \cite{Lyytikaeinen2012} differs  from previous literature listed in \cite{Allers2005}  is in the inclusion of municipality fixed effects.  Not accounting for the time-invariant characteristics could result in spurious spatial dependence explaining the disparity of \cite{Lyytikaeinen2012}'s findings from the previous ones. 
	
	With this in mind, we now apply our test of linearity  to the  level data from year 2000, without fixed effects,  given by
	\begin{equation*}
		t_{i}=\lambda \displaystyle\sum_{j=1}^{n}w_{ij} t_{j}+ X_{i}^{\prime }\beta +\gamma_0+\gamma_1 P _{i}+ \gamma_2 M_i + \epsilon _{i},  \ \ \ \ i=1,\ldots,411.
	\end{equation*}
	Interestingly, we observe in Table \ref{table:lyttest2} that the null of linearity is strongly rejected in the case of general property tax rates where the estimated $\lambda$ is positive and significant, in contrast to the case of residential building rate where linearity is not rejected and estimated $\lambda$ is insignificant. Once again, using standardized $\chi^2_p$-based critical values evidently does not change our conclusions. 
	
	Thus the finding of significant spatial competition in the general property tax when not accounting for municipality fixed effects appears to be due to an unreliable specification. This supports the conclusion of \cite{Lyytikaeinen2012} that there is no  competitive behaviour in the setting of Finnish property tax rates and that the linear SAR specification is a reliable model. It also offers further explanation to  \cite{Lyytikaeinen2012}'s insight  that previous findings of the presence of tax competition need to viewed with caution and may be resulting from specification problems.
	
	\section{Conclusion}\label{sec:concl}
	We have presented an easy to implement test for the linearity of the spatial lag in a SAR model. The test is nonparametric but does not require nonparametric estimation due to the reliance on the LM principle. In this sense, it is reminiscent of classical LM diagnostic tests. Our work contributes to a growing literature on specification testing for the SAR model, which has long been subject to scrutiny in this regard, see e.g. \cite{Pinkse2010}. We differ from the existing approaches in assuming an index-type alternative. Other approaches to specification testing include non-smoothing tests, which are under-studied in the SAR context but commonplace in time series and independent data settings, and these are a possible direction for future research.

	\begin{table}[H]
		\caption{Linearity test on tax rate data without fixed effects }
		\vspace{5pt}
		\centering
		\begin{tabular}{c cc cc}
			\hline
			&\multicolumn{2}{c}{\footnotesize{General property tax rate}  }&    \multicolumn{2}{c}{\footnotesize{Residential building tax rate}  }  \\ 
			$p$ &$\hat{\lambda}$&  $\mathcal{T}$  &  $\hat{\lambda}$  &  $\mathcal{T}$\\ 
			
			\hline
			
			4 & $\underset{ (4.0877)  }{0.4019^{***}} $ & $5.8073^{***}$   &  $\underset{ (0.7768) }{0.0955}$ & -0.3789\\
			5 & $  \underset{ (4.2725) }{0.4236^{***}} $&  $4.7876^{***}$ & $  \underset{(0.8404)}{0.1019} $&-0.5596 \\ 
			6 & $ \underset{  (4.2662) }{0.4188^{***}} $&  $4.344^{***}$ & $\underset{ (0.8414)  }{0.1019}$& -1.1595 \\ 
			
			\hline
		\end{tabular}%
		
		\raggedright
		\vspace{10pt}
		{ \footnotesize  Using level data from 2000.   $n=411$.  t-statistics in parenthesis.    * $p-value<0.1$; ** $p-value<0.05$; *** $p-value<0.01$.  }
		\label{table:lyttest2}
	\end{table}

	\newpage
	\section*{Appendix A}
	\renewcommand{\thesection}{A}
	\setcounter{equation}{0} \renewcommand{\theequation}{A.%
		\arabic{equation}} 
	\setcounter{theorem}{0} \renewcommand{\thetheorem}{A\arabic{theorem}}
	
	\begin{proof}[Proof of Lemma \ref{lemma:Mhat}]
		$\left\Vert \hat{M} - M\right\Vert$ has variance bounded by 
		\begin{align}
			\frac{m^2}{n^2} \Delta_{Z'Z} + \frac{m^2}{n^2} \underset{l,k}{\sup} \underset{i}{\sum} \mathbb{E}\left(z_{il}^2 z_{ik}^2\right) &\leq \frac{m^2}{n^2} \Delta_{Z'Z} + \frac{m^2}{n^2} \underset{l,k}{\sup}\underset{i}{\sum} \left( \mathbb{E}(z_{il}^4)\right)^{1/2}\left( \mathbb{E}(z_{ik}^4)\right)^{1/2}\notag\\
			& = O\left(\frac{p^2}{n}\right) + O\left(\frac{p^2}{n}\right),
		\end{align}
		where the first equality holds under (\ref{delta_cond}), since $m\sim p$,  and the second one follows as long as $p/\sqrt{n}=o(1)$ under Assumption \ref{ass:eigsandinsts}. Next, $\left\Vert \hat{J} - J\right\Vert$ has variance bounded by 
		\begin{align}
			\frac{m^2}{n^2} \Delta_{Z'U} + \frac{m^2}{n^2} \underset{l,r}{\sup} \underset{i}{\sum} \mathbb{E}\left(z_{il}^2 u_{ir}^2\right) &\leq \frac{m^2}{n^2} \Delta_{Z'U} + \frac{m^2}{n^2} \underset{l,r}{\sup}\underset{i}{\sum} \left( \mathbb{E}(z_{il}^4)\right)^{1/2}\left( \mathbb{E}(u_{ir}^4)\right)^{1/2}\notag\\
			& = O\left(\frac{p^2}{n}\right) + O\left(\frac{p^2}{n}\right),
		\end{align}
		where the first equality holds under (\ref{delta_cond}), since $m\sim p$,  and the second one follows as long as $p/\sqrt{n}=o(1)$ under Assumption \ref{ass:eigsandinsts}.
	\end{proof}
	
	\begin{theorem}\label{theorem:dhatd} 
		Under Assumptions \ref{ass:errors}-\ref{ass:Mhat}, under $\mathcal{H}_{0A}$ in (\ref{approximate_null}), for $p^{3/2}/n \rightarrow 0$ as $n\rightarrow \infty$,
		\begin{equation}
			\left\Vert \hat{d} - d \right \Vert =O_p\left( \frac{p^{3/2}}{n} \right).
		\end{equation}
		
	\end{theorem}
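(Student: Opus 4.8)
The plan is to show that $\hat{d}$ admits an \emph{exact} closed-form representation as a linear functional of $Z'\epsilon$ weighted by sample second-moment matrices, mirroring the population object $d$ in (\ref{d}), and then to bound the discrepancy by a matrix-perturbation argument fuelled by Lemma \ref{lemma:Mhat}. Throughout I work under $\mathcal{H}_{0A}$, so the data obey $y = G\theta_0 + \epsilon$ with $G = (Wy, X)$, and I use $P_Z = n^{-1}Z\hat{M}^{-1}Z'$ together with $\hat{J}=n^{-1}Z'U$, $\hat{N}=n^{-1}Z'G$.

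First I would derive the representation. The $2$SLS estimator under $\mathcal{H}_{0A}$ is $\hat{\theta}=(G'P_ZG)^{-1}G'P_Zy$, so $\hat{\theta}-\theta_0=(G'P_ZG)^{-1}G'P_Z\epsilon$ and the residual is $\hat{\epsilon}=\epsilon-G(\hat{\theta}-\theta_0)$. Using $G'Z=n\hat{N}'$ and $U'Z=n\hat{J}'$, a direct computation gives $G'P_ZG=n\hat{N}'\hat{M}^{-1}\hat{N}$ and hence
\begin{equation*}
Z'\hat{\epsilon}=\left(I-\hat{N}(\hat{N}'\hat{M}^{-1}\hat{N})^{-1}\hat{N}'\hat{M}^{-1}\right)Z'\epsilon=:\hat{B}\,Z'\epsilon.
\end{equation*}
Substituting into (\ref{dhat}) yields $\hat{d}=-\tfrac{2}{n}\hat{J}'\hat{M}^{-1}\hat{B}\,Z'\epsilon=:-\tfrac{2}{n}\hat{A}\,Z'\epsilon$. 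The identical algebra at the population level, combined with the identity $M^{-1/2}\mathcal{M}_{NM}M^{-1/2}=M^{-1}B$ where $B=I-N(N'M^{-1}N)^{-1}N'M^{-1}$, shows that the $d$ of (\ref{d}) equals $-\tfrac{2}{n}A\,Z'\epsilon$ with $A=J'M^{-1}B$. Therefore $\hat{d}-d=-\tfrac{2}{n}(\hat{A}-A)Z'\epsilon$, and it suffices to bound $\|\hat{A}-A\|$ and $\|Z'\epsilon\|$ separately, since $\|\hat{d}-d\|\le\tfrac{2}{n}\|\hat{A}-A\|\,\|Z'\epsilon\|$.

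The factor $\|Z'\epsilon\|$ is routine: under Assumptions \ref{ass:errors} and \ref{ass:eigsandinsts}, bounded fourth moments and the uncorrelatedness of $\epsilon$ with $Z$ give $\mathbb{E}\|Z'\epsilon\|^2=O(pn)$, whence $\|Z'\epsilon\|=O_p(\sqrt{pn})$. The substantive work is the perturbation bound $\|\hat{A}-A\|=O_p(p/\sqrt{n})$. Here I would first record that Assumption \ref{ass:eigsandinsts} delivers $\|M^{-1}\|,\|J\|,\|N\|=O(1)$ (the last two via Cauchy--Schwarz and the bounded eigenvalues of $M$ and $L$), and that, since $\|\hat{M}-M\|=o_p(1)$, the sample analogues $\|\hat{M}^{-1}\|$, the fixed-dimension inverse $(\hat{N}'\hat{M}^{-1}\hat{N})^{-1}$, and the oblique projection $\hat{B}$ are all $O_p(1)$ and invertible with probability approaching one. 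Lemma \ref{lemma:Mhat} then supplies $\|\hat{M}-M\|=\|\hat{J}-J\|=O_p(p/\sqrt{n})$, from which $\|\hat{N}-N\|\le\|\hat{J}-J\|=O_p(p/\sqrt{n})$ and $\|\hat{M}^{-1}-M^{-1}\|\le\|\hat{M}^{-1}\|\|\hat{M}-M\|\|M^{-1}\|=O_p(p/\sqrt{n})$. Decomposing $\hat{A}-A$ and the inner difference $\hat{B}-B$ telescopically into sums of terms, each carrying exactly one building-block difference sandwiched between $O_p(1)$ factors, propagates the $O_p(p/\sqrt{n})$ rate to $\|\hat{A}-A\|$. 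Combining, $\|\hat{d}-d\|\le\tfrac{2}{n}O_p(p/\sqrt{n})\,O_p(\sqrt{pn})=O_p(p^{3/2}/n)$, as claimed.

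The main obstacle is the telescoping step at growing dimension $m\sim p$: controlling $\hat{M}^{-1}-M^{-1}$ and, especially, the oblique projection difference $\hat{B}-B$ at the sharp operator-norm rate $O_p(p/\sqrt{n})$. This rests on the uniform eigenvalue bounds of Assumption \ref{ass:eigsandinsts} (and Lemma \ref{lemma:Omegaeigs}) and on verifying that the relevant random matrices invert well with probability approaching one, which in turn requires $\|\hat{M}-M\|=o_p(1)$, i.e.\ the concentration rate of Lemma \ref{lemma:Mhat} to be available under the maintained rate conditions. The remaining ingredients---the exact $2$SLS algebra and the crude bound $\|Z'\epsilon\|=O_p(\sqrt{pn})$---are mechanical.
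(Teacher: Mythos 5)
Your overall strategy---writing $\hat{d}$ in closed form as a linear functional of $Z'\epsilon$ weighted by sample moment matrices, matching it against $d$, and propagating the rates $\Vert\hat M - M\Vert = \Vert\hat J - J\Vert = O_p(p/\sqrt n)$ from Lemma \ref{lemma:Mhat} through a telescoping perturbation bound---is exactly the paper's, down to the final accounting $\tfrac{2}{n}\cdot O_p(p/\sqrt n)\cdot O_p(\sqrt{pn}) = O_p(p^{3/2}/n)$. However, there is a genuine gap at the very first step. Under $\mathcal{H}_{0A}$ in (\ref{approximate_null}) only the sieve coefficients $\alpha_j$ vanish; the remainder $r(\cdot)$ in (\ref{lambda_prime}) does not. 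The data therefore obey $y = G\theta_0 + R + \epsilon$ with $R = \left(r(w_1'y),\ldots,r(w_n'y)\right)'$, not $y = G\theta_0 + \epsilon$. Consequently $Z'\hat\epsilon = \hat B\, Z'(\epsilon + R)$, your claimed identity $\hat d = -\tfrac{2}{n}\hat A\, Z'\epsilon$ is false, and $\hat d - d$ contains an additional term involving $Z'R$ (the paper's $-\tfrac{2}{n}\hat J'\hat M^{-1}Z'R$ in (\ref{dhat_equiv})) that your decomposition omits entirely.

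That extra term is not negligible for free: the paper bounds it by $O_p\left(n^{-1}\Vert Z\Vert\,\Vert R\Vert\right) = O_p(p^{-\nu})$ using $\Vert Z\Vert = O_p(\sqrt n)$ and $\sup_z r(z) = O_p(p^{-\nu})$, and then invokes the decay condition $n/p^{\nu+1/2} = o(1)$ of Assumption \ref{ass:eigsandinsts} to conclude that $p^{-\nu}$ is dominated by $p^{3/2}/n$. Your proof needs this additional step; it is also the only place where the smoothness index $\nu$ enters the theorem, so omitting it hides a substantive hypothesis. Everything else---the identity $M^{-1/2}\mathcal{M}_{NM}M^{-1/2} = M^{-1}B$, the bound $\Vert Z'\epsilon\Vert = O_p(\sqrt{pn})$, and the $O_p(p/\sqrt n)$ control of $\hat M^{-1} - M^{-1}$, $\hat N - N$ and the oblique projection difference---matches the paper's argument and is fine.
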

	
	\begin{proof}
		
		\noindent  We first show the following rate for the IV/2SLS estimator under $\mathcal{H}_{0A}$ in (\ref{approximate_null})
		\begin{equation}\label{IV_rate}
			\left\Vert \hat{\theta} - \theta \right\Vert = O_p\left(\sqrt{\frac{p}{n}}\right),
		\end{equation}
		where $\mathbb{X}= (Wy, X)$. We write, under Assumptions \ref{ass:regressors}, \ref{ass:eigsandinsts} and \ref{ass:Mhat},
		\begin{align}\label{IV_rate1}
			\left \Vert \hat{\theta}- \theta\right \Vert =& \left\Vert\left(\frac{1}{n}\mathbb{X}^\prime P_Z \mathbb{X}\right)^{-1} \frac{1}{n} \mathbb{X}^\prime P_Z \epsilon \right\Vert \leq K \left\Vert \frac{Z^\prime \epsilon}{n}\right \Vert = O_p\left(\sqrt{\frac{p}{n}} \right),
		\end{align}
		since under Assumptions \ref{ass:errors} and \ref{ass:eigsandinsts}, $E\left\Vert Z^\prime \epsilon/n\right\Vert^2$ is bounded by
		\begin{equation}\label{theorem1_1}
			\frac{K}{n^2} m \underset{0 < j \leq p}{\sup} \ \mathbb{E}(z_j^\prime \epsilon)^2 =\frac{K}{n^2} m \underset{0 < j \leq m}{\sup} \ \underset{t}\sum\mathbb{E}(z_{tj}^2)\mathbb{E}(\epsilon_t^2) = O\left(\frac{m}{n}\right)=O\left(\frac{p}{n}\right).
		\end{equation}
		
		Let $R=\left( r(w_1' y), \ldots ,r(w_n' y)  \right)^\prime$ be the $n\times 1$ vector of approximation errors and (\ref{lambda_prime}). We denote its $i-$th component by $R_i$. Also, from the 2SLS expression for $\hat{\theta}- \theta $ in (\ref{IV_rate1}),
		\begin{align}\label{dhat_equiv}
			\hat{d}=& -\frac{2}{n}U^\prime  P_Z \left(I - \mathbb{X} (\mathbb{X}^\prime P_Z \mathbb{X})^{-1} \mathbb{X}^\prime P_Z \right) \epsilon  -\frac{2}{n}U^\prime  P_Z R \notag \\
			=& -\frac{2}{n}U^\prime P_Z \left(I - P_Z \mathbb{X} (\mathbb{X}^\prime P_Z \mathbb{X})^{-1} \mathbb{X}^\prime P_Z \right) P_Z \epsilon  -\frac{2}{n} U^\prime  P_Z R \notag \\
			=& -\frac{2}{n} \hat{J}^\prime \hat{M}^{-1/2}\left(I - \hat{M}^{-1/2} \hat{N}\left( \hat{N}^\prime \hat{M}^{-1} \hat{N} \right)^{-1} \hat{N}^\prime \hat{M}^{-1/2} \right) \hat{M}^{-1/2} Z^\prime \epsilon - \frac{2}{n} \hat{J}^\prime \hat{M}^{-1} Z^\prime R \notag \\
			=& -\frac{2}{n} \hat{J}^\prime \hat{M}^{-1/2} \hat{\mathcal{M}}_{NM} \hat{M}^{-1/2} Z^\prime \epsilon - \frac{2}{n} \hat{J}^\prime \hat{M}^{-1} Z^\prime R, 
		\end{align}
		where $\hat{\mathcal{M}}_{NM}= \left(I - \hat{M}^{-1/2} \hat{N}\left( \hat{N}^\prime \hat{M}^{-1} \hat{N} \right)^{-1} \hat{N}^\prime \hat{M}^{-1/2} \right)$. From (\ref{dhat}), we write
		\begin{align}\label{d_decomposition}
			\left\Vert \hat{d} - d\right \Vert  \leq \left\Vert \frac{2}{n} \hat{J}^\prime \hat{M}^{-1/2}\hat{\mathcal{M}}_{NM}\hat{M}^{-1/2} Z^\prime \epsilon  - \frac{2}{n} J^\prime M^{-1/2}\mathcal{M}_{NM} M^{-1/2} Z^\prime \epsilon  \right \Vert +\left\Vert \frac{2}{n} \hat{J}^\prime \hat{M}^{-1} Z^\prime R\right\Vert. 
		\end{align}
		
	\noindent	By standard algebra, the first term in (\ref{d_decomposition}) is bounded by
		\begin{align}
			&\left\Vert \hat{J} - J \right\Vert \left\Vert \hat{M}^{-1} \right\Vert \left\Vert \frac{1}{n} Z^\prime \epsilon \right\Vert + \left\Vert J\right\Vert \left\Vert \hat{M}^{-1} - M^{-1}\right \Vert \left\Vert \frac{1}{n}Z^\prime \epsilon \right\Vert  \notag\\
			&+\left\Vert \hat{J}  - J\right \Vert \left\Vert \hat{M}^{-1}\right \Vert \left\Vert \hat{N}\right \Vert \left\Vert \left( \hat{N}^\prime \hat{M}^{-1} \hat{N}\right)^{-1}\right \Vert \left\Vert \hat{N} \right \Vert  \left\Vert \hat{M}^{-1} \right \Vert \left\Vert \frac{1}{n}Z^\prime \epsilon\right \Vert \notag \\
			+& \left\Vert J \right \Vert \left\Vert \hat{M}^{-1} -M^{-1}\right \Vert \left\Vert \hat{N}\right \Vert \left\Vert \left( \hat{N}^\prime \hat{M}^{-1} \hat{N}\right)^{-1}\right \Vert \left\Vert \hat{N} \right \Vert  \left\Vert \hat{M}^{-1}\right \Vert \left\Vert \frac{1}{n}Z^\prime \epsilon\right \Vert \notag \\ +& \left\Vert J \right \Vert \left\Vert M^{-1}\right \Vert \left\Vert \hat{N} - N\right \Vert \left\Vert \left( \hat{N}^\prime \hat{M}^{-1} \hat{N}\right)^{-1}\right \Vert \left\Vert \hat{N} \right \Vert  \left\Vert \hat{M}^{-1}\right \Vert \left\Vert \frac{1}{n}Z^\prime \epsilon\right \Vert  \notag \\
			+& \left\Vert J \right \Vert \left\Vert M^{-1}\right \Vert \left\Vert N\right \Vert \left\Vert \left( \hat{N}^\prime \hat{M}^{-1} \hat{N}\right)^{-1} - \left(N^\prime M^{-1} N\right)^{-1} \right \Vert \left\Vert \hat{N} \right \Vert  \left\Vert \hat{M}^{-1}\right \Vert \left\Vert \frac{1}{n}Z^\prime \epsilon\right \Vert \notag \\
			+& \left\Vert J \right \Vert \left\Vert M^{-1}\right \Vert \left\Vert N\right \Vert \left\Vert  \left(N^\prime M^{-1} N\right)^{-1} \right \Vert \left\Vert \hat{N} -N  \right \Vert  \left\Vert \hat{M}^{-1}\right \Vert \left\Vert \frac{1}{n}Z^\prime \epsilon\right \Vert \notag \\+& \left\Vert J \right \Vert \left\Vert M^{-1}\right \Vert \left\Vert N\right \Vert \left\Vert  \left(N^\prime M^{-1} N\right)^{-1} \right \Vert \left\Vert N  \right \Vert  \left\Vert \hat{M}^{-1} - M^{-1}\right \Vert \left\Vert \frac{1}{n}Z^\prime \epsilon\right \Vert 
		\end{align}
		From (\ref{theorem1_1}), $\left\Vert Z^\prime \epsilon /n\right\Vert = O_p(\sqrt{p/n})$. 
		Under Assumption \ref{ass:Mhat}, we have
		\begin{equation}
			\left\Vert \hat{N} - N\right \Vert =O_p\left(\frac{p}{\sqrt{n}}\right) \ \ \ \ \text{and} \ \ \ \ \  \left\Vert \hat{J} - J \right \Vert =O_p\left(\frac{p}{\sqrt{n}}\right) .
		\end{equation}
		Also, under Assumptions \ref{ass:eigsandinsts} and \ref{ass:Mhat}, 
		\begin{equation}
			\left\Vert \hat{M}^{-1}- M^{-1} \right \Vert \leq \left\Vert M^{-1}\right \Vert  \left\Vert \hat{M}^{-1}\right \Vert  \left\Vert \hat{M}- M \right \Vert = O_p\left(\frac{p}{\sqrt{n}}\right)
		\end{equation}
		and similarly, using analogous arguments (omitted to avoid repetitions), under Assumptions  \ref{ass:eigsandinsts} and \ref{ass:Mhat}, 
		\begin{align}
			\left\Vert (\hat{N}^\prime \hat{M}^{-1} \hat{N})^{-1}  - (N^\prime M^{-1} N)^{-1}\right\Vert &\leq \left\Vert N^\prime M^{-1} N \right \Vert  \left\Vert \hat{N}^\prime \hat{M}^{-1} \hat{N} \right \Vert \left\Vert \hat{N}^{\prime} \hat{M}^{-1} \hat{N} - N^\prime M^{-1} N\right \Vert \notag\\
			&= O_p\left(\frac{p}{\sqrt{n}}\right).
		\end{align}
				\noindent The first term at the RHS of (\ref{d_decomposition}) is thus $O_p\left( p^{3/2}/n \right)$. The second term at the RHS of (\ref{d_decomposition}) is instead
		\begin{equation}\label{d_remainder}
			\left\Vert \frac{2}{n}\hat{J}^\prime \hat{M}^{-1} Z^\prime R \right \Vert = O_p\left(\frac{1}{n} \left\Vert R\right \Vert \left\Vert Z\right \Vert \right) = O_p(p^{-\nu}),
		\end{equation}
		where the first equality at the RHS of (\ref{d_remainder}) follows under Assumptions \ref{ass:eigsandinsts} and \ref{ass:Mhat}. The second equality follows since $\Vert Z \Vert = O_p(\sqrt{n})$ under Assumptions  \ref{ass:eigsandinsts} and \ref{ass:Mhat}, and each component of the $n\times 1$ vector $R$ is $	 O_p(p^{-\nu})$
		by Assumption \ref{ass:eigsandinsts}, and hence $||R||= O_p(\sqrt{n} \  p^{-\nu})$. The third equality in (\ref{d_remainder}) follows from Assumption \ref{ass:eigsandinsts}.
		
		Under Assumption \ref{ass:Mhat}, the first term in (\ref{d_decomposition}) dominates the second one as long as $\nu$ satisfies $n/p^{\nu + 3/2}=o(1)$ as $n\rightarrow \infty$, which holds under Assumptions \ref{ass:eigsandinsts}.
	\end{proof}

	\begin{proof}[Proof of Lemma \ref{lemma:Omegaeigs}]
		First, under Assumptions \ref{ass:errors} and \ref{ass:eigsandinsts},
		\begin{align}\label{omega1}
			\limsup_{n\rightarrow\infty}\eigbig(\Omega)&= \limsup_{n\rightarrow\infty}\Vert \Omega\Vert = \limsup_{n\rightarrow\infty}\Vert n^{-1} \mathbb{E}(Z^\prime \Sigma Z)\Vert \leq  \underset{i}{\sup}\  \sigma^2_i \ \limsup_{n\rightarrow\infty} \left\Vert \frac{\mathbb{E}(Z^\prime Z)}{n}\right \Vert \notag \\ &=\underset{i}{\sup}\  \sigma^2_i \Vert M \Vert<K.
		\end{align}
		Next,
		\begin{align}\label{omega2}
			\liminf_{n\rightarrow\infty}\eigsmall(\Omega)&=\liminf_{n\rightarrow\infty}\eigsmall\left(\mathbb{E}\left(\frac{Z^\prime \Sigma Z}{n}\right)\right)=\liminf_{n\rightarrow\infty}\eigsmall\left(\mathbb{E}\left(\frac{\sum_i z_i z_i^\prime \sigma_i^2}{n}\right)\right) \notag \\& \geq  \underset{i}{\inf} \ \sigma^2_i\ \liminf_{n\rightarrow\infty}  \eigsmall(M)>k,
		\end{align}
		again under Assumptions \ref{ass:errors} and \ref{ass:eigsandinsts}.
	\end{proof}
	
	\begin{proof}[Proof of Theorem \ref{theorem:approx}]
		
		\noindent The claim in Theorem \ref{theorem:approx} is equivalent to 
		\begin{equation}\label{th2_equiv}
			\hat{d}^\prime \hat{H}^{-1}\hat{d} -d^\prime H^{-1}d= o_p\left(\frac{\sqrt{p}}{n}\right).
		\end{equation}
		From some standard manipulations, the LHS of (\ref{th2_equiv}) can be written as
		\begin{equation}\label{th2_terms}
			\left(\hat{d} -d \right)^\prime \hat{H}^{-1}\hat{d} + d^\prime H^{-1} (\hat{d}-d) + d^\prime \hat{H}^{-1}\left(H- \hat{H}\right) H^{-1}\hat{d}.
		\end{equation}
		The norm of the last displayed expression is bounded by
		\begin{equation}\label{theorem2_1}
			K\left\Vert \hat{d} - d \right \Vert \left\Vert \hat{H}^{-1} \right \Vert \left\Vert \hat{d}\right \Vert + K\left\Vert \hat{d} - d \right \Vert \left\Vert H^{-1} \right \Vert \left\Vert d\right \Vert + K \left\Vert d\right \Vert  \left \Vert  \hat{H}^{-1} \right \Vert \left \Vert H- \hat{H}\right \Vert \left \Vert H^{-1}\right \Vert \left\Vert \hat{d}\right \Vert .
		\end{equation}
		From Theorem \ref{theorem:dhatd}, $\left\Vert \hat{d} -d \right \Vert = O_p(p^{3/2}/n)$. 
		Under Assumptions \ref{ass:eigsandinsts} and \ref{ass:Mhat}, and from (\ref{theorem1_1}) we have $\left\Vert d\right \Vert = O\left(\sqrt{p/n}\right)$. Also, from Theorem 1,
		\begin{equation}
			\left\Vert \hat{d}\right \Vert \leq \left\Vert \hat{d}-d\right \Vert + \left\Vert d\right \Vert = O_p\left(\sqrt{\frac{p}{n}}\right),
		\end{equation}
		where the last equality follows as long as $p^2/n =o(1)$.
		Also, under Assumptions \ref{ass:errors},  \ref{ass:eigsandinsts} and \ref{ass:Mhat}, $\left\Vert \hat{H}^{-1} \right \Vert=O_p(1)$ and $\left\Vert H^{-1} \right \Vert=O_p(1)$. Thus,  first and second terms in (\ref{theorem2_1}) are $O_p(p^2/n^{3/2})$, which are $o_p(p^{1/2}/n)$ if $p^3/n = o(1)$.
		
		Using 2SLS estimates for $\theta_0$ and in view of (\ref{dhat_equiv}), we can re-write the expression in (\ref{Hhat}) as
		\begin{align}\label{Hhat_equiv}
			\hat{H}&= 4 \hat{J}^\prime \hat{M}^{-1/2} \hat{\mathcal{M}}_{NM} \hat{M}^{-1/2} \tilde{\Omega}\hat{M}^{-1/2} \hat{\mathcal{M}}_{NM} \hat{M}^{-1/2} \hat{J} + \frac{4}{n} \hat{J}^\prime \hat{M}^{-1/2}\hat{\mathcal{M}}_{NM} \hat{M}^{-1/2} Z^\prime \epsilon R^\prime Z \hat{M}^{-1} \hat{J} \notag \\
			&+ \frac{4}{n}\left( \hat{J}^\prime \hat{M}^{-1/2}\hat{\mathcal{M}}_{NM} \hat{M}^{-1/2} Z^\prime \epsilon R^\prime Z \hat{M}^{-1} \hat{J} \right)'\frac{4}{n} \hat{J}^\prime \hat{M}^{-1} Z^\prime R R^\prime Z \hat{M}^{-1} \hat{J}
		\end{align}
		where $\tilde{\Omega}=Z^\prime \tilde{\Sigma} Z/n$, with $\tilde{\Sigma}$ being an $n\times n$  diagonal matrix such that $\tilde{\Sigma}_{ii}=\epsilon_i ^2$. From (\ref{H}), we write
		\begin{align}\label{H_conv}
			\left \Vert \hat{H} - H \right \Vert  \leq & \left\Vert \hat{J}^\prime \hat{M}^{-1/2} \hat{\mathcal{M}}_{NM} \hat{M}^{-1/2} \tilde{\Omega}\hat{M}^{-1/2} \hat{\mathcal{M}}_{NM} \hat{M}^{-1/2} \hat{J} \right.\notag\\
			& \left.- J^\prime M^{-1/2} \mathcal{M}_{NM} M^{-1/2} \Omega M^{-1/2}\mathcal{M}_{NM} M^{-1/2} J \right\Vert \notag \\
			+& 2\left\Vert \frac{4}{n} \hat{J}^\prime \hat{M}^{-1/2}\hat{\mathcal{M}}_{NM} \hat{M}^{-1/2} Z^\prime \epsilon R^\prime Z \hat{M}^{-1} \hat{J} \right\Vert + \left\Vert \frac{4}{n} \hat{J}^\prime \hat{M}^{-1} Z^\prime R R^\prime Z \hat{M}^{-1} \hat{J}
			\right \Vert.
		\end{align}
		By standard algebra, the first term in (\ref{H_conv}) is bounded by
		\begin{align}
			\begin{split}
				&\left\Vert \hat{J}^\prime \hat{M}^{-1} \tilde{\Omega} \hat{M}^{-1} \hat{J} -  J^\prime M^{-1} \Omega M^{-1} J \right\Vert \notag \\+ &\left\Vert \hat{J}^\prime \hat{M}^{-1} \hat{N} \left(\hat{N}^\prime \hat{M}^{-1} \hat{N}\right)^{-1} \hat{N}^\prime \hat{M}^{-1} \tilde{\Omega}\hat{M}^{-1}\hat{J} - J^\prime M^{-1} N \left(N^\prime M^{-1} N\right)^{-1} N^\prime M^{-1} \Omega M^{-1} J \right\Vert \notag \\
				+& \left\Vert \hat{J}^\prime \hat{M}^{-1} \hat{N} \left(\hat{N}^\prime \hat{M}^{-1} \hat{N}\right)^{-1} \hat{N}^\prime \hat{M}^{-1} \tilde{\Omega}\hat{M}^{-1} \hat{N} \left(\hat{N}^\prime \hat{M}^{-1} \hat{N} \right)^{-1} \hat{N} \hat{M}^{-1}\hat{J} \right. \notag \\- & \left.  J^\prime M^{-1} N \left(N^\prime M^{-1} N\right)^{-1} N^\prime M^{-1} \Omega M^{-1} N \left(N^\prime M^{-1} N \right)^{-1} N M^{-1}J \right \Vert .
			\end{split}
		\end{align}
		\normalsize
		We provide the details of the rate of the first term in the last displayed expression, while the remaining ones follow similarly. Specifically,
		\begin{align}\label{H_conv_first_1}
			&\left\Vert \hat{J}^\prime \hat{M}^{-1} \tilde{\Omega} \hat{M}^{-1} \hat{J} -  J^\prime M^{-1} \Omega M^{-1} J \right\Vert  \leq  \left \Vert \hat{J}-J \right\Vert \left\Vert \hat{M}^{-1}\right\Vert ^2 \left\Vert \tilde{\Omega}\right\Vert  \left\Vert \hat{J}\right\Vert \notag \\+& \left \Vert J \right\Vert \left\Vert \hat{M}^{-1}- M^{-1}\right\Vert \left\Vert \tilde{\Omega}\right\Vert \left\Vert \hat{M}^{-1}\right\Vert \left\Vert \hat{J}\right\Vert + \left \Vert J \right\Vert \left\Vert M^{-1}\right\Vert \left\Vert \tilde{\Omega}- \Omega \right\Vert \left\Vert \hat{M}^{-1}\right\Vert \left\Vert \hat{J}\right\Vert \notag \\
			+& \left \Vert J \right\Vert \left\Vert M^{-1}\right\Vert \left\Vert \Omega\right\Vert \left\Vert \hat{M}^{-1}- M^{-1}\right\Vert \left\Vert \hat{J}\right\Vert + \left \Vert J \right\Vert \left\Vert M^{-1}\right\Vert^2 \left\Vert \Omega\right\Vert \left\Vert \hat{J} - J\right\Vert.
		\end{align}
		Under Assumptions  \ref{ass:eigsandinsts} and \ref{ass:Mhat}, most terms can be dealt with similarly to the proof of Theorem \ref{theorem:dhatd}, and $||\hat{M}^{-1}-M^{-1}||=O_p(p/\sqrt{n})$ and $||\hat{J}- J ||=O_p(p/\sqrt{n})$. We focus instead on $||\tilde{\Omega}- \Omega||$ and write
		\begin{equation}\label{omega_terms}
			\left\Vert \tilde{\Omega} -\Omega \right \Vert \leq \left\Vert \tilde{\Omega} -\bar{\Omega}\right\Vert + \left\Vert \bar{\Omega} - \Omega \right\Vert,
		\end{equation}
		with $\bar{\Omega}= Z^\prime \Sigma Z /n$. The first term in (\ref{omega_terms}) is $\left\Vert Z^\prime (\tilde{\Sigma}- \Sigma) Z/n \right\Vert$, which has second moment bounded by
		\begin{align}
			\frac{m^2}{n^2}\underset{1\leq i,j \leq n}{\sup} \underset{l}\sum \underset{k}\sum \mathbb{E}(z_{li} z_{lj} z_{ki}z_{kj}) \mathbb{E} \left((\epsilon_l^2 - \sigma_l^2)( \epsilon_k^2 - \sigma_k^2)\right)  &\leq  \frac{Km^2}{n^2}\underset{1\leq i,j \leq n}{\sup} \underset{l}\sum \mathbb{E}(z_{li}^2 z_{lj}^2)\notag\\
			& = O\left(\frac{p^2}{n}\right),
		\end{align}
		under Assumptions \ref{ass:errors} and \ref{ass:eigsandinsts} and since $m\sim p$. Thus, $ \left\Vert \tilde{\Omega} - \bar{\Omega}\right\Vert =O_p(p/\sqrt{n})$.
		The second term in (\ref{omega_terms}) has mean zero and variance bounded by 
		\begin{equation}\label{omega_terms2}
			\frac{m^2}{n^2} \underset{1\leq i,j \leq n}{\sup} \ var\left(\underset{l} \sum z_{li}z_{lj} \sigma_l^2\right) \leq \underset{1\leq l \leq n} {\sup} \ \sigma_l^2 \frac{m^2}{n^2}\Delta = O\left(\frac{p^2}{n}\right), 
		\end{equation}
		under Assumptions  \ref{ass:errors}, \ref{ass:eigsandinsts} and \ref{ass:Mhat}, and hence it is $O_p(p/\sqrt{n})$. Thus, $||\tilde{\Omega}- \Omega||= O_p(p/\sqrt{n})$. 
		We conclude then that the expression in (\ref{H_conv_first_1}) is $O_p\left(p/\sqrt{n}\right)$. Proceeding similarly for all other terms, we can conclude that the first term at the RHS of  (\ref{H_conv}) is $O_p\left(p/\sqrt{n}\right)$. 
		
		By similar arguments to those that led to (\ref{d_remainder}), under Assumptions \ref{ass:eigsandinsts} and \ref{ass:Mhat}, the second term in (\ref{H_conv}) is bounded by
		\begin{equation}
			K \left\Vert \frac{1}{n}Z^\prime \epsilon \right\Vert \left\Vert Z\right\Vert \left\Vert R\right\Vert =O_p\left(\frac{\sqrt{n}}{p^{\nu-1/2}} \right),
		\end{equation}
		which is negligible compared to the first term in (\ref{H_conv}) since $n/p^{(\nu+1/2)}=o(1)$, under Assumptions \ref{ass:eigsandinsts} and  \ref{ass:Mhat}.
		Similarly, the third term in (\ref{H_conv}) is $O_p(n p^{-2\nu})$, which is negligible compared to the first term since $n^{3/2}/{p^{2\nu+ 1}}=o(1)$ as $n\rightarrow \infty$, under Assumptions \ref{ass:eigsandinsts} and \ref{ass:Mhat}. 
		We conclude that
		\begin{equation}
			\left\Vert \hat{H} - H\right\Vert = O_p\left(p/\sqrt{n}\right).
		\end{equation}
		By Assumption \ref{ass:Mhat}, the last term in (\ref{theorem2_1}) is thus $O_p(p^2/n^{3/2})$, given $\Vert d\Vert=O_p(\sqrt{p/n})$ and  $\left\Vert\hat{d}\right\Vert=O_p(\sqrt{p/n})$. Hence, the second term in (\ref{theorem2_1}) is $o_p(\sqrt{p}/n)$ as long as $p^3/n=o(1)$, concluding the proof.
	\end{proof}

	\begin{proof}[Proof of Theorem \ref{theorem:nulldist}]
		By Theorem \ref{theorem:approx}, it suffices to show that
		\[
		\frac{n d^\prime H^{-1} d - p}{\sqrt{2p}}\overset{d}\rightarrow N(0,1).
		\]	
		Under the null hypothesis, the above CLT is for a SAR model with an increasing number of instruments. By Remark 3 of \cite{Gupta2018c}, the claim follows by a trivial modification of the arguments in the proof of Theorem 3.3 therein to allow for heteroskedastic innovations (see e.g. Lemma A.7 in the Supporting Information of \cite{Korolev2026} and the justification of claim (D.3) in \cite{Robinson2008a}).
	\end{proof}
	
	\begin{proof}[Proof of Theorem \ref{theorem:consistency}]

		\noindent Let $\gamma=(\alpha_1,\ldots\alpha_p, \lambda, \beta^\prime)^\prime $, partitioned as $\gamma=(\alpha^\prime, \theta^\prime)^\prime$. Corresponding to $\tilde{d}= \partial \mathcal{Q}/\partial \gamma$ defined in (\ref{d}) under $\mathcal{H}_{0A}$,  we now define the unconstrained gradient vector, $(p+k+1)\times 1$, $\tilde{d}_{U}$ as 
		\begin{equation}\label{full_d_H0}
			\tilde{d}_{U}(\alpha, \lambda, \beta, y)=   -\frac{2}{n} U^\prime   P_Z \left( S_p(\lambda, \alpha, y)-X\beta\right),
		\end{equation}
		where $\tilde{d}_U(0_{p\times 1}, \lambda, \beta, y)= \tilde{d}$ defined in (\ref{d}).
		
		We partition $, 
		\hat{J}=n^{-1}U $ as $\hat{J}= (\hat{\Xi}, \hat{N})$, where $\hat{\Xi}$ and $\hat{N}$ are $m\times p$ and $m\times (k+1)$, respectively, with a similar partition for its expected value $J= (\Xi, N)$. Also, we define the $(p+k+1) \times (p+k+1)$ matrix $\hat{D}= \partial ^2\mathcal{Q}/\partial \gamma\partial \gamma^\prime$,  such that the first $p\times p$ block is given by
		\begin{equation}
			\hat{D}_{11}= \frac{2}{n} \begin{pmatrix} \Upsilon_{1}(y)^\prime \\ \ldots\\ \Upsilon_p(y)^\prime \end{pmatrix}  P_Z \begin{pmatrix} \Upsilon_{1}(y)^\prime \\ \ldots\\ \Upsilon_p(y)^\prime \end{pmatrix}^\prime = 2 \hat{\Xi}^\prime \hat{M}^{-1} \hat{\Xi}
		\end{equation}
		the block 1-2 (or the transposed of 2-1 block) is the $p\times (k+1)$ matrix
		\begin{equation}
			\hat{D}_{12} =\hat{D}_{21}^{\prime}= \frac{2}{n} \begin{pmatrix} \Upsilon_{1}(y)^\prime \\ \ldots\\ \Upsilon_p(y)^\prime \end{pmatrix}  P_Z \begin{pmatrix} Wy & X  \end{pmatrix} = 2 \hat{\Xi}^{\prime}\hat{M}^{-1} \hat{N}
		\end{equation}
		and the 2-2 block is the $(k + 1) \times (k+1)$ matrix
		\begin{equation}
			\hat{D}_{22} = \frac{2}{n} \begin{pmatrix} X^\prime \\  y^\prime W^\prime \end{pmatrix}  P_Z \begin{pmatrix} Wy & X  \end{pmatrix} = 2 \hat{N}^\prime \hat{M}^{-1} \hat{N}.
		\end{equation}
		Under Assumption \ref{ass:eigsandinsts}, $\Vert \hat{D} \Vert =O_p(1)$ and $\liminf_{n\rightarrow\infty}\underline{{eig}}\left(\hat{D} \right) >0$ with inverse defined and partitioned in the usual way. Also, $\hat{D}$ does not depend on any unknowns. In line we our previous notation, we also define the corresponding limit quantities as $D_{11}= 2 \Xi^{\prime} M^{-1} \Xi$, $D_{12}= D_{21}^\prime= \Xi^\prime M^{-1} N$ and $D_{22}= 2 N^\prime M^{-1} N$.
		
		From standard algebra, by the mean value theorem (MVT), given $\hat{d}$ in (\ref{dhat}),
		\begin{align}
			& \hat{d}_p= \left .\frac{\partial \mathcal{Q}}{\partial \alpha^\prime}\right\vert _{(0_{1 \times p}, \hat{\theta}^\prime)^\prime} = \left. \frac{\partial \mathcal{Q}}{\partial \alpha^\prime}\right\vert_{(0_{1\times p}, \theta_0^\prime)^\prime}+ \hat{D}_{12}(\hat{\theta}- \theta_0),\notag \\
			& 0= \left .\frac{\partial \mathcal{Q}}{\partial \theta^\prime}\right\vert _{(0_{1 \times p}, \hat{\theta}^\prime)^\prime} = \left. \frac{\partial \mathcal{Q}}{\partial \theta^\prime}\right\vert_{(0_{1\times p}, \theta_0^\prime)^\prime}+ \hat{D}_{22}(\hat{\theta}- \theta_0),
		\end{align}
where we now and subsequently denote true parameter values by $0$ subscripts. Thus,
		\begin{align}
			\hat{d}_p= \left(I_p ; \  -\hat{D}_{12}\hat{D}_{22}^{-1}\right)\left.\begin{pmatrix}\frac{\partial \mathcal{Q}}{\partial \alpha^\prime} \\ \frac{\partial \mathcal{Q}}{\partial \theta^\prime} \end{pmatrix}\right\vert_{(0_{1\times p}, \theta_0^\prime)^\prime} =&  \left(I_p ; \  -\hat{D}_{12}\hat{D}_{22}^{-1}\right) \tilde{d}_{U}(0_{p\times 1}, \lambda_0, \beta_0) \notag \\
			=& \left(I_p ; \  -\hat{D}_{12}\hat{D}_{22}^{-1}\right) \tilde{d}(\lambda_0, \beta_0)
		\end{align}
		according to the definition in (\ref{full_d_H0}) and (\ref{d}), and with $I_p$ denoting the $p\times p$ identity matrix. Hence, given $\hat{H}$ in (\ref{Hhat}),
		\begin{equation}\label{equiv_th4}
			n\hat{d}_p^\prime \hat{H}^{11} \hat{d}_p = n \tilde{d}_U(0_{p\times 1}, \lambda_0, \beta_0)^\prime \hat{\mathcal{V}}\tilde{d}_{U}(0_{p\times 1}, \lambda_0, \beta_0),
		\end{equation}
		with 
		\begin{align}\label{italic_V}
			\hat{\mathcal{V}}=& \begin{pmatrix} I_{p} \\ - \hat{D}_{22}^{-1} \hat{D}_{21}\end{pmatrix} \hat{H}^{11} \left(I_{p}\ ; \ - \hat{D}_{12}\hat{D}_{22}^{-1} \right). 
		\end{align}
		
		Thus, 
		\begin{align}\label{stat1}
			n \hat{d}^\prime \hat{H}^{-1}\hat{d}= n\hat{d}_p^\prime \hat{H}^{11} \hat{d}_p = n \tilde{d}_{U}(0_{p\times 1}, \lambda_0, \beta_0)^\prime \hat{\mathcal{V}} \tilde{d}_{U}(0_{p\times 1}, \lambda_0, \beta_0) 
		\end{align}
		
		However,  under $\mathcal{H}_{1A}$, $\tilde{d}_{U}(0_{p\times 1}, \lambda_0, \beta_0)$  is no longer evaluated at the true parameter value as $\alpha_0 \neq 0$.  By MVT around $\alpha_0$, we can write
		\begin{equation}\label{d_H1}
			\tilde{d}_U(0_{p\times 1}, \lambda_0, \beta_0)= \tilde{d}_U(\alpha_0,\lambda_0, \beta_0)  - \frac{\partial \tilde{d}_U(\bar{\alpha}, \lambda_0, \beta_0)}{\partial \alpha} \alpha_0 \equiv \tilde{d}_U(\alpha_0,\lambda_0, \beta_0)   - \tau,
		\end{equation}
		with $\bar{\alpha}$ being  intermediate point such that $\Vert \bar{\alpha}-\alpha_0 \Vert \leq \Vert \alpha_0\Vert$ and $\tau$ being the $(p+k+1) \times 1$ vector defined as
		\begin{equation}
			\tau=-	\frac{\partial \tilde{d}_U(\bar{\alpha}, \lambda_0, \beta_0)}{\partial \alpha} \alpha_0= \frac{2}{n}U^\prime  P_Z \begin{pmatrix} \Upsilon_{1}(y) & \ldots& \Upsilon_p(y) \end{pmatrix} \alpha_0 =  \hat{J}^\prime \hat{M}^{-1}   \hat{\Xi} \alpha_0.
		\end{equation}
		Similarly to (\ref{theorem1_1}) and (\ref{d_remainder}), 
		\begin{align}
			\Vert \tilde{d}_U(\alpha_0,\lambda_0, \beta_0)\Vert \leq & K \Vert\hat{J} \Vert \Vert \hat{M}^{-1}\Vert \left \Vert \frac {1}{n} Z^\prime \epsilon \right\Vert +  K \Vert\hat{J} \Vert \Vert \hat{M}^{-1}\Vert \left \Vert \frac {1}{n} Z^\prime R \right\Vert \notag \\= &O_p\left(\max\left(\sqrt{\frac{p}{n}}, p^{-\nu} \right)\right) = O_p \left(\sqrt{\frac{p}{n}}\right)
		\end{align}
		for $\nu$ satisfying $\sqrt{n}/p^{\nu+1/2}=o(1)$, which holds under Assumption \ref{ass:eigsandinsts}, and $\Vert \tau \Vert=O_p(1)$ and non-zero, since $\alpha_0\neq 0$. 

		We furthermore define the unconstrained version of $\hat{H}$ evaluated at generic parameters' value as
		\begin{equation}
			\tilde{H}_U(\alpha, \lambda, \beta)= 4 \hat{J}^\prime \hat{M}^{-1} \tilde{\Omega}_U(\alpha, \lambda, \beta) \hat{M}^{-1} \hat{J},  
		\end{equation}
		partitioned in the usual way, where $\tilde{\Omega}_U$ is defined according to (\ref{epsilonU_def}). We also define its limit quantity $H_U(\alpha_0, \lambda_0, \beta_0)= 4 J^\prime M^{-1} \Omega M^{-1} J$, where, as previously defined, $\Omega= n^{-1}\mathbb{E}(Z^\prime \Sigma Z)$ and $\Sigma$ the $n\times n$ diagonal matrix with diagonal given by $\sigma_i^2, i=1,\ldots,n$.
		Similarly to what deduced in (\ref{omega1}) and (\ref{omega2}), under Assumptions \ref{ass:eigsandinsts}-\ref{ass:power}, $\Vert \tilde{H}_U (\alpha, \lambda, \beta)\Vert = O_p(1)$, uniformly in $(\alpha, \lambda, \beta)$ and $\liminf_{n\rightarrow\infty}\underline{\textit{eig}}(\tilde{H}_U(\alpha, \lambda, \beta))> c>0$, uniformly in $(\alpha, \lambda, \beta)$ and almost surely. 
		
		Clearly, $\hat{H}= \tilde{H}_U(0, \hat{\lambda}, \hat{\beta})$. We can apply the MVT to $\hat{H}^{-1}$ around the true parameters' value and obtain
		\begin{align}\label{H_H1}
			\hat{H}^{-1}= &\tilde{H}_U^{-1}(\alpha_0, \lambda_0,\beta_0)+  \sum_{j=1}^p \tilde{H}_U^{-1}(\bar{\alpha}, \bar{\lambda}, \bar{\beta}) \frac{\partial \tilde{H}_U}{\partial{\alpha_j}}\vert_{(\bar{\alpha}, \bar{\lambda}, \bar{\beta})} \tilde{H}_U^{-1}(\bar{\alpha}, \bar{\lambda}, \bar{\beta}) \alpha_{0j} \notag \\
			-& \sum_{t=1}^k \tilde{H}_U^{-1}(\bar{\alpha}, \bar{\lambda}, \bar{\beta}) \frac{\partial \tilde{H}_U}{\partial{\beta_t}}\vert_{(\bar{\alpha}, \bar{\lambda}, \bar{\beta})} \tilde{H}_U^{-1}(\bar{\alpha}, \bar{\lambda}, \bar{\beta}) (\hat{\beta}_t - \beta_{0t}) \notag \\
			- &\tilde{H}_U^{-1}(\bar{\alpha}, \bar{\lambda}, \bar{\beta}) \frac{\partial \tilde{H}_U}{\partial{\lambda}}\vert_{(\bar{\alpha}, \bar{\lambda}, \bar{\beta})} \tilde{H}_U^{-1}(\bar{\alpha}, \bar{\lambda}, \bar{\beta}) (\hat{\lambda} - \lambda_{0})
			\equiv  \tilde{H}_U^{-1}(\alpha_0, \lambda_0,\beta_0) + T,
		\end{align}
		where $\bar{\alpha}$, $\bar{\beta}$ and $\bar{\lambda}$ are intermediate points such that $\Vert \bar{\alpha}- \alpha_0\Vert \leq \Vert \alpha_0\Vert$, $\Vert \bar{\beta}- \beta_0 \Vert \leq \Vert \hat{\beta}- \beta_0 \Vert$ and $\vert \bar{\lambda}- \lambda_0 \vert \leq \vert \hat{\lambda}- \lambda_0 \vert$.  Under $\mathcal{H}_{0A}$, $\Vert T\Vert = O_p(\sqrt{p/n}).$Under $\mathcal{H}_{1A}$, $\alpha_{0j} \neq 0$ for some $j=1,\ldots,p$ and, since $\hat{\lambda}$ and $\hat{\beta}_{t}$ for $t=1,\ldots,k$ are restricted estimates, $\hat{\lambda} - \lambda_{0} = O_p(1)$ and $\hat{\beta}_t - \beta_{0t}=O(1)$ for some $t=1,\ldots,k$. Thus, under Assumptions \ref{ass:eigsandinsts}-\ref{ass:power}, $\Vert T \Vert =O_p(p)$ and $\liminf_{n\rightarrow\infty}\underline{\textit{eig}}(T)> c>0$. By partitioning $T$ in the usual way, we obtain $\hat{H}^{11} = \tilde{H}_U^{11}(\alpha_0, \lambda_0, \beta_0) + T_{11}$.  Also, let 
		\begin{equation}\label{V_def}
			\tilde{\mathcal{V}}(\alpha_0, \lambda_0, \beta_0)= \begin{pmatrix} I_{p} \\ - \hat{D}_{22}^{-1} \hat{D}_{21}\end{pmatrix} \tilde{H}_U^{11}(\alpha_0, \lambda_0, \beta_0) \left(I_{p}\ ; \ - \hat{D}_{12}\hat{D}_{22}^{-1} \right)
		\end{equation}
		and
		\begin{equation}\label{W_limit}
			\tilde{\mathcal{W}}= \begin{pmatrix} I_{p} \\ - \hat{D}_{22}^{-1} \hat{D}_{21}\end{pmatrix} T_{11} \left(I_{p}\ ; \ - \hat{D}_{12}\hat{D}_{22}^{-1} \right).
		\end{equation}
		
		
		\noindent From (\ref{d_H1}) and (\ref{H_H1}), (\ref{stat1}) becomes
		\begin{align}
			n\hat{d}_p^\prime \hat{H}^{11} \hat{d}_p =& n \tilde{d}_{U}(\alpha_0, \lambda_0, \beta_0)^\prime \tilde{\mathcal{V}}(\alpha_0, \lambda_0, \beta_0) \tilde{d}_{U}(\alpha_0, \lambda_0, \beta_0) + 2n \tau^\prime \tilde{\mathcal{V}}(\alpha_0, \lambda_0, \beta_0) \tilde{d}_{U}(\alpha_0, \lambda_0, \beta_0) \notag \\
			+& n \tau^\prime \tilde{\mathcal{V}}(\alpha_0, \lambda_0, \beta_0) \tau + n \tilde{d}_{U}(\alpha_0, \lambda_0, \beta_0)^\prime \tilde{\mathcal{W}} \tilde{d}_{U}(\alpha_0, \lambda_0, \beta_0) \notag\\
			&+ 2n \tau^\prime \tilde{\mathcal{W}} \tilde{d}_{U}(\alpha_0, \lambda_0, \beta_0) 
			+ n \tau^\prime \tilde{\mathcal{W}} \tau,
		\end{align}
		and thus
		\begin{align}\label{statistic_h1}
			\frac{n\hat{d}_p^\prime \hat{H}^{11} \hat{d}_p -p}{(2p)^{1/2}}&= \frac{n \tilde{d}_{U}(\alpha_0, \lambda_0, \beta_0)^\prime \tilde{\mathcal{V}}(\alpha_0, \lambda_0, \beta_0) \tilde{d}_{U}(\alpha_0, \lambda_0, \beta_0)-p}{(2p)^{1/2}}\notag\\
			& + \frac{\sqrt{2}n}{\sqrt{p}}\tau^\prime \tilde{\mathcal{V}}(\alpha_0, \lambda_0, \beta_0) \tilde{d}_{U}(\alpha_0, \lambda_0, \beta_0) \notag \\
			&+  \frac{n}{\sqrt{2p}}\tau^\prime \tilde{\mathcal{V}}(\alpha_0, \lambda_0, \beta_0) \tau + \frac{n}{\sqrt{2p}}\tilde{d}_{U}(\alpha_0, \lambda_0, \beta_0)^\prime \tilde{\mathcal{W}} \tilde{d}_{U}(\alpha_0, \lambda_0, \beta_0)\notag\\ 
			&+ \frac{\sqrt{2} n}{\sqrt{p}}\tau^\prime \tilde{\mathcal{W}} \tilde{d}_{U}(\alpha_0, \lambda_0, \beta_0) + \frac{ n}{\sqrt{2p}} \tau^\prime \tilde{\mathcal{W}} \tau 
		\end{align}
		By a similar argument adopted in the proof of Theorem \ref{theorem:dhatd}, we can show $\Vert \tilde{d}_U(\alpha_0, \lambda_0, \beta_0) - d_U \Vert = O_p(p^{3/2}/n)$, with $d_U= -2/n J^\prime M^{-1} Z^\prime \epsilon$ and $d_p= (I_p; -D_{12} D_{22}^{-1}) d_U $. Also, we can show
		\begin{equation}\label{H_U convergence}
			\Vert \tilde{H}_U(\alpha_0, \lambda_0, \beta_0) - H_U \Vert = O_p\left(\frac{p}{\sqrt{n}}\right),
		\end{equation}
		such that, under Assumptions \ref{ass:eigsandinsts}-\ref{ass:power},  $\Vert \tilde{H}^{11}_U(\alpha_0, \lambda_0, \beta_0) - H_U^{11} \Vert = O_p\left(p/\sqrt{n}\right)$.
		We show the claim in (\ref{H_U convergence}) by routine arguments as in (\ref{H_conv_first_1}) and (\ref{omega_terms}), after observing that 
		$\tilde{H}_U(\alpha_0, \lambda_0, \beta_0) = 4 \hat{J}^{\prime} \hat{M}^{-1} \tilde{\Omega}_R \hat{M}^{-1} \hat{J}$, with $\tilde{\Omega}_R = \tilde{\Omega} + \sum_{i=1}^{p} z_i z_i^\prime R_i^2 /n$, and 
		\begin{align}
			&\Vert \tilde{\Omega}_R - \Omega \Vert \leq \Vert \tilde{\Omega} - \Omega \Vert + \left\Vert \frac{\sum_{i=1}^p z_i z_i^\prime R_i^{2}}{n}\right\Vert
			=  O_p\left(\frac{p}{\sqrt{n}}\right) + \underset{1\leq i\leq n}{\sup} R_i^2 \ \Vert \hat{M} \Vert  \notag \\= & O_p\left(\frac{p}{\sqrt{n}}\right) + O_p(p^{-2\nu})= O_p\left(\frac{p}{\sqrt{n}}\right),
		\end{align}
		where the last equality follows for $\nu$ satisfying $\sqrt{n}/p^{2\nu+1}=o(1)$, which holds under Assumption \ref{ass:eigsandinsts}.

		After showing, similarly to steps in the proof of Theorem \ref{theorem:approx}, that
		\begin{equation}
			\tilde{d}_{U}(\alpha_0, \lambda_0, \beta_0)^\prime \tilde{\mathcal{V}}(\alpha_0, \lambda_0, \beta_0) \tilde{d}_{U}(\alpha_0, \lambda_0, \beta_0) -  d_{p}^\prime {H}_U^{11}d_{p}=o_p\left(\frac{\sqrt{p}}{n}\right), 
		\end{equation}
		we conclude that the first term in (\ref{statistic_h1}) is $O_p(1)$, as shown in Theorem  \ref{theorem:nulldist}. 
		By standard norm inequalities, the second term in (\ref{statistic_h1}) is $O_p(\sqrt{n})$, the third is $O_p(n/\sqrt{p})$, the fourth is $O_p(p^{3/2})$,  the fifth is $O_p(p\sqrt{n})$ and  the sixth is $O_p(n\sqrt{p})$. The last term dominates the former five ones and thus, under $\mathcal{H}_{1A}$, for all $\eta>0$, $\mathbb{P}\left(|\mathcal{T}|^{-1} \leq \eta/n \sqrt{p} \right) \rightarrow 1$ as $n\rightarrow \infty $ and hence consistency of $\mathcal{T}$ follows.
	\end{proof}
	
	\begin{proof}[Proof of Theorem \ref{theorem:local}] 
		
		\noindent Similarly to the proof of Theorem \ref{theorem:consistency}, we write
		\begin{align}
			\tilde{d}_U(0_{p\times 1}, \lambda_0, \beta_0)&= \tilde{d}_U(\alpha_{0n},\lambda_0, \beta_0)  - \frac{\partial \tilde{d}_U(\bar{\alpha}_n, \lambda_0, \beta_0)}{\partial \alpha} \alpha_{0n}\notag\\
			&= \tilde{d}_U(\alpha_{0n},\lambda_0, \beta_0)  - \frac{\partial \tilde{d}_U(\bar{\alpha}_n, \lambda_0, \beta_0)}{\partial \alpha} \frac{p^{1/4}}{\sqrt{n}}\delta, \label{d_Hl}
		\end{align}
		where we denote by $\alpha_{0n}$ the value of the $p\times 1$ vector $\alpha$ under $\mathcal{H}_{\ell}$, $\delta= (\delta_1,\ldots., \delta_p)^\prime$, $\Vert \bar{\alpha}_n - \alpha_{0n}\Vert \leq \Vert \alpha_{0n}\Vert \sim p^{1/4}/n$ since $\Vert \delta \Vert =1$ and 
		\begin{equation}
			\tau_n= \frac{\partial \tilde{d}_U(\bar{\alpha}, \lambda_0, \beta_0)}{\partial \alpha} \alpha_{0n}=   \hat{J}^\prime \hat{M}^{-1}   \hat{\Xi} \frac{p^{1/4}}{\sqrt{n}}\delta,
		\end{equation}
		such that $\Vert \tau_n\Vert = O(p^{1/4}/\sqrt{n})$.
		
		Also, by consistency of  $\left(\hat\lambda,\hat\beta'\right)'$, under $\mathcal{H}_{l}$ under Assumptions \ref{ass:regressors} and \ref{ass:eigsandinsts}-\ref{ass:power}, 
		\begin{align}
			\hat{H}=&  \tilde{H}_U(\alpha_{0n}, \lambda_0,\beta_0)-  \sum_{j=1}^p  \frac{\partial \tilde{H}_U}{\partial{\alpha_j}}\vert_{(\bar{\alpha}, \bar{\lambda}, \bar{\beta})} \alpha_{0n,j} + T_{1n},
		\end{align}
		with $\Vert T_{1n}\Vert = O_p(\sqrt{p/n})$. 
		Let
		\begin{equation}
			T_{2n}= \sum_{j=1}^p \frac{\partial \tilde{H}_U}{\partial{\alpha_j}}\vert_{(\bar{\alpha}, \bar{\lambda}, \bar{\beta})}  \alpha_{0n,j} = \frac{p^{1/4}}{\sqrt{n}} \sum_{j=1}^p   \frac{\partial \tilde{H}_U}{\partial{\alpha_j}}\vert_{(\bar{\alpha}, \bar{\lambda}, \bar{\beta})}   \delta_{j}. 
		\end{equation}
		By standard norm inequalities, under Assumptions \ref{ass:eigsandinsts}-\ref{ass:power}, we write
		\begin{equation}
			\Vert T_{2n}\Vert \leq \frac{p^{1/4}}{n^{1/2}}\underset{1\leq j\leq p}{\sup} \left \Vert \frac{\partial \tilde{H}_U}{\partial{\alpha_j}}\vert_{(\bar{\alpha}, \bar{\lambda}, \bar{\beta})}\right \Vert \sum_{j=1}^{p}|\delta_j| \leq  \frac{p^{3/4}}{n^{1/2}}\underset{1\leq j\leq p}{\sup} \left \Vert \frac{\partial \tilde{H}_U}{\partial{\alpha_j}}\vert_{(\bar{\alpha}, \bar{\lambda}, \bar{\beta})}\right \Vert \Vert \delta\Vert ^2 = O_p\left(\frac{p^{3/4}}{\sqrt{n}}\right).
		\end{equation}
		Thus, 
		\begin{equation}\label{Tn_rate}
			\Vert T_n\Vert \equiv \Vert T_{1n} + T_{2n}\Vert \leq \Vert T_{1n}\Vert + \Vert T_{2n}\Vert =O_p(p^{3/4}/\sqrt{n}).
		\end{equation}
		Under $\mathcal{H}_l$, we also obtain
		\begin{align}\label{HhattoHU}
			&\Vert \hat{H} - H_U\Vert \leq \Vert \hat{H} - \tilde{H}_U(\alpha_{0n}, \lambda_0, \beta_0) \Vert + \Vert \tilde{H}_U(\alpha_{0n}, \lambda_0, \beta_0) - H_U \Vert \notag \\&= O_p\left(\frac{p^{3/4}}{n}\right)+ O_p\left(\frac{p}{\sqrt{n}}\right)=O_p\left(\frac{p}{\sqrt{n}}\right),
		\end{align}
		where the first equality follows from (\ref{Tn_rate}) and (\ref{H_U convergence}). Hence,
		\begin{align}\label{statistic_hl}
			\frac{n\hat{d}_p^\prime \hat{H}^{11} \hat{d}_p -p}{(2p)^{1/2}}&= \frac{n \tilde{d}_{U}(\alpha_{0n}, \lambda_0, \beta_0)^\prime \hat{\mathcal{V}}_n \tilde{d}_{U}(\alpha_{0n}, \lambda_0, \beta_0)-p}{(2p)^{1/2}} \notag\\
			&+  \frac{\sqrt{2}n}{\sqrt{p}}\tau_n^\prime \hat{\mathcal{V}}_n \tilde{d}_{U}(\alpha_{0n}, \lambda_0, \beta_0) 
			+  \frac{n}{\sqrt{2p}}\tau_n^\prime \hat{\mathcal{V}}_n \tau_n .
		\end{align}
		After showing, by standard arguments and using (\ref{HhattoHU}), 
		\begin{equation}
			\tilde{d}_U(\alpha_{0n}, \lambda_0, \beta_0)^\prime \hat{\mathcal{V}} \tilde{d}_U(\alpha_{0n}, \lambda_0, \beta_0) - d_p^\prime H_U^{11}d_p = o_p\left(\frac{\sqrt{p}}{n}\right),
		\end{equation}
		we conclude that the first term in (\ref{statistic_hl}) converges to $\mathcal{N}(0,1)$ as $n\rightarrow \infty$.
		The third term in (\ref{statistic_hl}) has a finite limit, since
		\begin{equation}
			\frac{n}{\sqrt{2p}}\tau_n^\prime \hat{\mathcal{V}}_n(\alpha_{0n}, \lambda_0, \beta_0) \tau_n - \delta^\prime \Xi^\prime M^{-1} J \mathcal{V} J^\prime M^{-1} \Xi \delta = o_p(1), 
		\end{equation}
		where 
		\begin{equation}
			\mathcal{V}= \begin{pmatrix} I_{p} \\ - D_{22}^{-1} D_{21}\end{pmatrix} H_U^{11}\left(I_{p}\ ; \ - D_{12}D_{22}^{-1} \right)
		\end{equation}
		and $\delta^\prime \Xi^\prime M^{-1} J \mathcal{V} J^\prime M^{-1} \Xi \delta \neq 0$ unless $\delta=0$.
		The second term in (\ref{statistic_hl}) can be written as
		\begin{equation}\label{second_local}
			\frac{\sqrt{2}n}{\sqrt{p}}\tau_n^\prime \hat{\mathcal{V}}_n \tilde{d}_{U} = \frac{\sqrt{2}n}{\sqrt{p}}\tau_n^\prime (\hat{\mathcal{V}}_n - \mathcal{V}) \tilde{d}_{U} +  \frac{\sqrt{2}n}{\sqrt{p}}\tau_n^\prime \mathcal{V}(\tilde{d}_U -d_U) + \frac{\sqrt{2}n}{\sqrt{p}} \tau_n^\prime \mathcal{V} d_U.
		\end{equation}
		By routine arguments, the first two terms of last displayed expression are, respectively, $O_p(p^{5/4}/n^{1/2})=o_p(1)$ and $O_p(p^{3/4}/n^{1/2})$, and are thus $o_p(1)$ as long as $p^3/n=o(1)$. The third term in (\ref{second_local}) is equivalent to
		\begin{align}\label{second_local_1}
			&\frac{\sqrt{2}n^{1/2}}{p^{1/4}} \delta^\prime (\hat{\Xi}^\prime - \Xi^\prime) \hat{M}^{-1}\hat{J}\mathcal{V}d_U + \frac{\sqrt{2}n^{1/2}}{p^{1/4}} \delta^\prime \Xi^\prime (\hat{M}^{-1}- M^{-1})\hat{J}\mathcal{V}d_U \notag \\
			& + \frac{\sqrt{2}n^{1/2}}{p^{1/4}} \delta^\prime \Xi^\prime M^{-1}(\hat{J}- J) \mathcal{V}d_U 
			+ \frac{\sqrt{2}n^{1/2}}{p^{1/4}} \delta^\prime \Xi^\prime M^{-1}J \mathcal{V}d_U.
		\end{align}
By routine arguments adopted throughout the proofs, the first three terms in (\ref{second_local_1}) are $O_p(p^{5/4}/n^{1/2})=o_p(1)$. The last term in (\ref{second_local_1}) has mean zero and variance bounded by
		\begin{align}
			& K \frac{1}{p^{1/2}}\delta^\prime \Xi^\prime M^{-1} J \mathcal{V}J^\prime M^{-1} \Omega M^{-1} J \mathcal{V}J^\prime M^{-1}\Xi \delta  \leq  K \frac{1}{p^{1/2}} \Vert\delta\Vert ^2 \Vert \Xi\Vert ^2 \Vert J\Vert^4 \Vert M^{-1} \Vert^4 \Vert \mathcal{V}\Vert^2 \Vert \Omega \Vert \notag\\
			&= O\left(\frac{1}{p^{1/2}}\right),
		\end{align}
		such that the last term in (\ref{second_local_1}) is $O_p(1/p^{1/4})$. Hence, the second term in (\ref{second_local}) is $O_p(1/p^{1/4})=o_p(1)$ since $p\rightarrow \infty$,  concluding the claim as
		\begin{equation}
			\frac{n\hat{d}_p^\prime \hat{H}^{11} \hat{d}_p -p}{\sqrt{2p}} \overset{d}{\rightarrow} \mathcal{N}(\varrho, 1),
		\end{equation}
		with $\varrho=\delta^\prime \Xi^\prime M^{-1} J \mathcal{V} J^\prime M^{-1} \Xi \delta  $.
	\end{proof}
	
	
	\begin{proof}[Proof of Theorem \ref{thm:quad_moms}]
		
		The proof closely follows that of Theorem 4.1 in \cite{Yang2025}. By the first order condition and Taylor expansion, we have
		\begin{equation*}
			\mathring{\theta}- \theta_{0} = -\left[Q_{1}^{\prime}(\mathring{\gamma})\check{\Psi}^{-1}Q_{1}(\bar{\gamma})\right]^{-1}Q_{1}^{\prime}(\mathring{\gamma})\check{\Psi}^{-1}q(\gamma_0),
		\end{equation*}
		\noindent where $Q_{1}(\mathring{\gamma}) = -n^{-1}\left(L_{1}^s\epsilon(\mathring{\gamma}),\ldots,L_{\ell}^s\epsilon(\mathring{\gamma}),Z\right)'\mathbb{X}$ with  $\epsilon(\mathring{\gamma}) = y-  \mathring{\lambda}Wy-X\mathring\beta$, and $\bar{\gamma}$ lies between $\mathring{\gamma}$ and the true parameter $\gamma_0 = (0_p',\lambda_0,{\beta_0}^{\prime})'=(0_p',\theta_0')'$. Then under the null,
		\begin{align}
			q(\mathring{\gamma}) &= q(\gamma_0) + Q_{1}(\bar{\gamma})(\mathring{\theta}- \theta_{0})\notag\\
			&= \left[I- Q_{1}(\mathring{\gamma})\left(Q_{1}^{\prime}(\mathring{\gamma})\check{\Psi}^{-1}Q_{1}(\mathring{\gamma})\right)^{-1}Q_{1}^{\prime}(\mathring{\gamma})\check{\Psi}^{-1}\right]q(\gamma_0). 
		\end{align}
		Denote $\bar{\mathcal{R}}_{1} = I- \check{\Psi}^{-1/2}Q_{1}(\mathring{\gamma})\left(Q_{1}'(\mathring{\gamma})\check{\Psi}^{-1}Q_{1}(\mathring{\gamma})\right)^{-1}Q_{1}'(\mathring{\gamma})\check{\Psi}^{-1/2}$, where is $\check{\Psi}^{1/2}$ is the square root of $\check{\Psi}$.  Then $\check{\Psi}^{-1/2}q(\mathring{\gamma}) =\bar{\mathcal{R}}_{1}\check{\Psi}^{-1/2}q(\gamma_0)$. Note that
		\begin{equation*}
			\mathcal{S} = \frac{nq^{\prime}(\gamma_0)\check{\Psi}^{-1/2}\bar{\mathcal{R}}_{1}^{\prime}\mathcal{P}(\check{\Psi}^{-1/2}Q(\mathring{\gamma}))\bar{\mathcal{R}}_{1}\check{\Psi}^{-1/2}q(\gamma_0)-p}{\sqrt{2p}} = \frac{\mathcal{S}_1-p}{\sqrt{2p}} + \frac{\mathcal{S}_2}{\sqrt{2p}},
		\end{equation*}
		where
		\begin{align*}
			\mathcal{S}_1 &= nq^{\prime}(\gamma_0)\Psi^{-1/2}\mathcal{R}_{1}\mathcal{P}(\Psi^{-1/2}D)\mathcal{R}_{1}\Psi^{-1/2}q(\gamma_0),\\
			\mathcal{S}_2 &= n\big(\|\mathcal{P}(\check{\Psi}^{-1/2}\bar{Q})\bar{\mathcal{R}}_{1}\check{\Psi}^{-1/2}q(\gamma_0)\|^2 - \|\mathcal{P}(\Psi^{-1/2}D)\mathcal{R}_{1}\Psi^{-1/2}q(\gamma_0)\|^2\big),
		\end{align*}
		with $\bar{Q} \equiv Q(\mathring{\gamma})$ and $\mathcal{R}_{1} = I- \mathcal{P}\big(\Psi^{-1/2}D_{1}\big)$, $D_1=n^{-1}\mathbb{E}\left(\left(L_1^s\epsilon,\ldots,L_\ell^s\epsilon, Z\right)'\mathbb{X}\right)$. We first show that $\mathcal{S}_2 = o_p(p^{1/2})$. As in \cite{Yang2025}, we can show that $\|Q(\bar{\gamma}) - D\| \leq \|Q(\bar{\gamma}) - Q(\gamma_0)\| + \|Q(\gamma_0) - D\| = O_p(p/\sqrt{n})$ and $\|Q_{1}(\bar{\gamma}) - D_{1}\| = O_p(p^{1/2}n^{-1/2})$ for any $\bar{\gamma}$ lying between $\mathring{\gamma}$ and $\gamma_0$. Then
		\begin{align*}
			&\|\check{\Psi}^{-1/2}\bar{\mathcal{R}}_{1}^{\prime}\mathcal{P}(\check{\Psi}^{-1/2}\bar{Q})\bar{\mathcal{R}}_{1}\check{\Psi}^{-1/2} - \Psi^{-1/2}\mathcal{R}_{1}\mathcal{P}(\check{\Psi}^{-1/2}D)\mathcal{R}_{1}\Psi^{-1/2}\|\\
			&= O_p\!\left(\frac{p}{\sqrt{n}} + p^{1/2-\nu}\right).
		\end{align*}
		Since $\|q(\gamma_0)\| = O_p(\sqrt{p/n})$, we have $\mathcal{S}_2 = O_p\big(p^2/\sqrt{n} +p^{3/2-\nu}\big) = o_p(p^{1/2})$ under Assumption \ref{ass:Psicheck}. $\Psi^{-1/2}D_{1}$ being a submatrix of $\Psi^{-1/2}D$ implies $D_{1}'\Psi^{-1}D(D'\Psi^{-1}D)^{-1}D'\Psi^{-1/2} = D_{1}'\Psi^{-1/2}$, which in turn implies that $\mathcal{R}_{1}'\mathcal{P}(\Psi^{-1/2}D)\mathcal{R}_{1} = \mathcal{P}(\Psi^{-1/2}D) - \mathcal{P}(\Psi^{-1/2}D_{1})$. Hence,
		\begin{equation*}
			\mathcal{S}_1 = nq^{\prime}(\gamma_0)\Psi^{-1/2}\big[\mathcal{P}(\Psi^{-1/2}D) - \mathcal{P}(\Psi^{-1/2}D_{1})\big]\Psi^{-1/2}q(\gamma_0).
		\end{equation*}
		Note that $\mathcal{P}(\Psi^{-1/2}D) - \mathcal{P}(\Psi^{-1/2}D_{1})$ is an idempotent matrix with rank $p$. Moreover, under the null, $q(\gamma_0) = (\epsilon'L_{1}\epsilon,\ldots,\epsilon'L_{\ell}\epsilon,\epsilon'Z)'/n$. The theorem now follows from Lemma A.9 of \cite{Yang2025}.
	\end{proof}	
	\clearpage
	\bibliographystyle{chicago}
	\bibliography{references}
\end{document}